\begin{document}

\makeatletter
\let\@fnsymbol\@arabic
\makeatother

%\title{\Large SIAM/ACM Preprint Series Macros for Use With LaTeX\thanks{Supported by GSF grants ABC123, DEF456, and GHI789.}}
\title{A new Integer Linear Program for the Steiner Tree Problem with Revenues, Budget and Hop Constraints\thanks{Partially supported by DFG, RTG 1855.}}
\author{Adalat Jabrayilov\thanks{TU Dortmund University, Germany.} \\
 \and 
Petra Mutzel\thanks{TU Dortmund University, Germany.}}
% %\author{Corey Gray\thanks{Society for Industrial and Applied Mathematics.} \\
% %\and
%Tricia Manning\thanks{Society for Industrial and Applied Mathematics.}}
\date{}

\maketitle

% Copyright Statement
% When submitting your final paper to a SIAM proceedings, it is requested that you include 
% the appropriate copyright in the footer of the paper.  The copyright added should be 
% consistent with the copyright selected on the copyright form submitted with the paper.
% Please note that "20XX" should be changed to the year of the meeting.

% Default Copyright Statement
% \fancyfoot[R]{\scriptsize{Copyright \textcopyright\ 20XX by SIAM\\
% Unauthorized reproduction of this article is prohibited}}

% Depending on which copyright you agree to when you sign the copyright form, the copyright 
% can be changed to one of the following after commenting out the default copyright statement
% above.

% \fancyfoot[R]{\scriptsize{Copyright \textcopyright\ 2019\\
% Copyright for this paper is retained by authors}}

%\fancyfoot[R]{\scriptsize{Copyright \textcopyright\ 20XX\\
%Copyright retained by principal author's organization}}

%\pagenumbering{arabic}
%\setcounter{page}{1}%Leave this line commented out.

\def\stprbh{\textrm{STPRBH}\xspace}
\def\setR{\mathbb{R}\xspace}
\def\setN{\mathbb{N}\xspace}
\def\arc#1#2{(#1,#2)}
\def\pos{\pi}
\def\minimize{\textrm{min}}
\def\subjectto{\textrm{s.t.}}
\def\hs{\hspace{8pt}} 
\def\bc{branch-and-cut\xspace}
\def\BC{\emph{branch-and-cut}\xspace}
\def\bp{\emph{branch-and-price}\xspace}
\def\SL{Sinnl-Ljubi\'c\xspace}
\def\Ljubic{Ljubi\'c\xspace}
\def\popeurc{\pop\textrm{+EURC}}
\def\eurc{\textrm{R}\xspace}
\def\red#1{\textcolor{red}{#1}}
\def\alert#1{\textcolor{red}{#1}}
\def\ie{i.e.,}
\def\eg{e.g.,}
\def\pop{\textrm{POP}}

%        _         _                  _   
%   __ _| |__  ___| |_ _ __ __ _  ___| |_ 
%  / _` | '_ \/ __| __| '__/ _` |/ __| __|
% | (_| | |_) \__ \ |_| | | (_| | (__| |_ 
%  \__,_|_.__/|___/\__|_|  \__,_|\___|\__|
% aaa
\begin{abstract} \small\baselineskip=9pt 
The Steiner tree problem with revenues, budgets and hop constraints (STPRBH) is a variant of the classical Steiner tree problem. 
This problem asks for a subtree in a given graph
with maximum revenues corresponding to its nodes, where its total edge costs respect the given budget, and the number of edges between each node and its root does not exceed the hop limit. We introduce a new binary linear program with polynomial size based on partial ordering, which (up to our knowledge) for the first time solves all STPRBH instances from the DIMACS benchmark set to optimality. The set contains graphs with up to 500 nodes and \numprint{12500} edges.
\end{abstract}

\section{Introduction}
Many network design applications ask for a minimum cost subtree
connecting some required nodes of a graph.
These applications can be modelled as
the \emph{Steiner tree problem (STP)}:
Given a weighted undirected graph $G$
with node set $V(G)$, edge set $E(G)$, 
edge costs $c \colon E(G) \rightarrow \setR^+$ 
and a subset of the required nodes called \emph{terminals}, 
this problem asks for a subtree $T$ of the graph, 
which contains all terminals and has minimum costs,
\ie~$\sum_{e \in E(T)} c_e $ is minimal.
STP belongs to the classical optimization problems and
is NP-hard \cite{doi:10.1137/0132072}.
The \emph{Steiner tree problem with revenues, budgets and hop constraints} 
(\stprbh)
is a variant of 
the STP
and considers the safety of the connection in addition to the costs.
It originates from telecommunication % applications,
and requires that the constructed tree contains a given service provider 
(root),
and that the path from the provider to each node of the tree has at most
%$H \in \setN^+$ hops (edges).
$H$ hops (edges).
The hop limit is needed to control the failure 
of the service, since
the failure probability of the path 
with at most $H$ edges
does not exceed $1-(1-p)^H$,
where $p$ is the failure probability of any edge.
The \stprbh\ is formally defined as follows:
In addition to the edge costs we are given 
a root node $r$, 
the node revenues $\rho \colon V(G) \rightarrow \setR^+$,
the budget $B \in \setR^+$,
and the hop limit $H \in \setN^+$.
The goal is to construct a subtree $T$ %=(V_T, E_T)$ 
of the graph,
which contains $r$, maximizes the collected revenues 
$\sum_{v \in V(T)} \rho_v$
and respects the hop and budget constraints, \ie\ %i.e.
the number of edges between the root $r$ and each node $v \in V(T)$ does not 
exceed the hop limit $H$ and 
the total edge
costs of the tree respect the budget $B$, \ie\ $\sum_{e \in E(T)} c_e \le B$.
We will call a feasible solution of the \stprbh\ a \emph{Steiner tree}.
Notice that in the literature this term is mostly used for
a feasible solution of the STP.
%

%zzz
%  ____                 _                                      _    
% |  _ \ _ __ _____   _(_) ___  _   _ ___  __      _____  _ __| | __
% | |_) | '__/ _ \ \ / / |/ _ \| | | / __| \ \ /\ / / _ \| '__| |/ /
% |  __/| | |  __/\ V /| | (_) | |_| \__ \  \ V  V / (_) | |  |   < 
% |_|   |_|  \___| \_/ |_|\___/ \__,_|___/   \_/\_/ \___/|_|  |_|\_\
%                                                                   
%\paragraph*{Previous work}
The \stprbh\ problem has been introduced 
by Costa et al.~\cite{article}.
They also presented three \bc\ approaches 
based on the Dantzig-Fulkerson-Johnson subtour elimination constraints,
the Miller-Tucker-Zemlin (MTZ) constraints, 
and the Garcia-Gouveia Hop formulation.
Their evaluation shows that the last formulation solves 
the majority of the 
DIMACS benchmark \cite{DIMACS11BENCHMARK} instances 
with up to 500 nodes and 625 edges
within the time limit of two hours. 
However, 
according to the authors \cite{COSTA200868}, 
these algorithms cannot solve even the root relaxation
for most of the large instances with 500 nodes and \numprint{12500} edges.
% This is reported in other paper 
% \cite{COSTA200868} of the same authors, 
% where they have presented three heuristics 
Therefore, the authors presented three heuristics 
based on the \emph{greedy} method, the \emph{destroy-and-repair} 
method and \emph{tabu search}.
In \cite{sinnl11} Sinnl has introduced
two \bp\ algorithms based on \emph{directed} and \emph{undirected path formulations}
and presented the computational results for all instances with up to 500
nodes and 625 edges. 
His approach solves 
the majority of these instances 
within the time limit of \numprint{10000} seconds.
Layeb et al.~\cite{Layeb2013SolvingTS} have proposed
two new models, one based on the MTZ formulation and one
based on the \emph{reformulation-linearisation-technique}.
In the computational experiments 
they have considered instances with up to 500 nodes and 625 edges, 
and hop limits $3,6,9,5,15$ (instances with $H \in \{12,25\}$ have been ignored).
The experiments show that
their algorithms can solve 
all the considered instances
within the time limit of two hours.
Fu and Hao have introduced two new heuristics,
the \emph{breakout local search} algorithm \cite{DBLP:journals/eor/FuH14}
and
the \emph{dynamic programming driven memetic search} 
algorithm \cite{2015:DPD:2748889.2748890}.
They also presented computational results for the large DIMACS graphs
with up to \numprint{12500} edges, which show an improvement 
of the feasible solutions compared to %found by
those of 
the heuristics presented in \cite{COSTA200868}.
Recently, Sinnl and \Ljubic \cite{DBLP:journals/mpc/SinnlL16} 
have suggested a \bc\ algorithm based on \emph{layered graphs}.
The algorithm has won the category STPRBH in the DIMACS 
challenge \cite{DIMACS11}.
Up to our knowledge this is the best 
state-of-the-art algorithm for the \stprbh.
While the previous exact algorithms 
can consistently solve only the small instances 
up to 500 nodes and 625 edges and hop limit 15 within the time limit of two hours,
this algorithm solves the majority of the graphs 
up to 500 nodes, \numprint{12500} edges and hop limit 25 
within a time limit of 20 minutes.
For example, the approaches 
\cite{article}, 
\cite{Layeb2013SolvingTS} and
\cite{sinnl11} 
solve the instances with $(|V|,|E|,H)=(500,625,15)$ on average in 
37.78, 104.17 and 127.63
seconds respectively,
while \SL's algorithm needs just 6.46 seconds for this.

\noindent\textit{Our contribution.}
%Our model is based on graph layering, too.
Many graph problems %, including graph layering,
can be seen as \emph{partial ordering problems} (\pop), 
\ie~compute a partial ordering of the nodes 
for a given graph 
that minimizes some objective function corresponding to this 
ordering.
%For example \emph{graph coloring problem} \cite{VCPPOP}, 
%\emph{graph layering problem} \cite{Jabrayilov2016}. 
Integer linear programming (ILP) formulations based on 
partial orderings
have shown to be practically successful 
%in the area of graph drawing 
%for graph layering 
for graph drawing 
\cite{Jabrayilov2016}
and vertex coloring \cite{VCPPOP}.
In this paper we present a new ILP based on 
partial ordering for the \stprbh.
In contrast to the algorithm of \SL
%Compared to \SL's algorithm
%it uses polynomial number of variables and constraints,
it has polynomial size,
and hence has the advantage that it
can be fed directly into a standard ILP solver,
while the former is a sophisticated \bc algorithm, which uses 
an exponential number of 
%constraints. %
\emph{subtour elimination constraints}.
We also present an experimental comparison of both approaches
using the DIMACS instances.
While \SL left four DIMACS instances unsolved
within a time limit of 3 hours,
our approach 
solves
all 414 instances within a time limit of 2 hours.
The new approach 
solves 410 of 414 instances within a time limit of 1027 seconds.
Our experiments showed that for all the tested instances
the strength of the LP relaxation of our basic model
dominates the \SL\ basic model.
We also suggest
a new reduction technique for the \stprbh,
which decreases the running times on average 
up to 1.55 times for the largest benchmark graphs
and 1.96 times for instances with the largest hop limit.

\noindent\textit{Outline.}
The paper is organized as follows.
We start with some notations (\autoref{sec:notations}). % which are useful for the clarity.
In \autoref{sec:popilp} we present our ILP and 
in \autoref{sec:preprocessing} a new reduction technique for \stprbh. 
The computational results are presented 
in \autoref{sec:evaluation}. 
We conclude with \autoref{sec:conclusion}.

\section{Notations}
\label{sec:notations}
%To describe our work, we will use the following definition and notations.
For a graph $G=(V,E)$ we denote its node set by $V(G)$, and 
its edge set by $E(G)$.
 % \subseteq V(G) \times V(G)$ 
%Our input graph is $G$,
%For the input graph $G$ 
%we may abbreviate $V(G)$ as $V$ 
%and $E(G)$ as $E$.
Each edge of an undirected graph %$G$ %element of $E(G)$ 
is a 2-element subset $e=\{u,v\}$ of $V(G)$.
For clarity we may write it as $e=uv$.
For an edge $e$ we denote with $G \setminus e$ the resulting graph after 
removing $e$ from $G$.
The end nodes $u,v$ of an edge $uv$ are called neighbours.
With $N(v)$ we denote the set of neighbours of node $v$ in $G$.
Each edge of a directed graph is an ordered pair $e=\arc uv$ of nodes 
and is called a directed edge or arc. 
An arc $\arc uv$ is an outgoing arc of $u$ and an incoming arc of $v$.

For a subgraph $G'$ of $G$
we denote with
$c(G')$ its total edge costs, \ie~$c(G')=\sum_{e \in E(G')} c_{e}$.
The (undirected) path $P$ in graph $G$ is a sequence 
$v_0,e_1,v_1,\cdots,v_{k-1},e_{k},v_k$ of distinct nodes
%$V(P) \subseteq V$ 
and edges, 
%$E(P) \subseteq E$, 
so that 
$e_i=\{v_{i-1},v_i\}$
for $1\le i \le k$.
We may call this path a $(v_0,v_k)$-path.
Similarly, if $P$ is a directed path, each arc $e_i$ satisfies $e_i=\arc {v_{i-1}}{v_{i}}$.
%The length of a (unweighted) path is a number of its edges.
We denote the length of the unweighted shortest $(u,v)$-path in the input graph $G$ 
with $len(u,v)$.
%\todo{Ob $len(u,v)$ und $l_{u,i}$ verwechselbar ?}
%$e_i=\{v_{i-1},v_i\}$.

A tree $T$ is a graph, 
which has exactly one $(u,v)$-path
for any pair of nodes $u,v \in V(T)$.
A rooted tree $T$
%is a tree $T$ with 
has a special node $r \in V(T)$.
The depth $d_v$ of node $v$ in the rooted tree $T$ is the
number of edges in
the $(r,v)$-path in $T$.
The depth of $T$ is the largest depth in it, 
\ie~$\max \{d_v \colon v \in V(T)\}$.

% zzz
%  ____   ___  ____  
% |  _ \ / _ \|  _ \ 
% | |_) | | | | |_) |
% |  __/| |_| |  __/ 
% |_|    \___/|_|    
%                    
\section{Partial-ordering based binary linear program}
\label{sec:popilp}

\subsection{Basic model.} A rooted tree $T$ with depth $h$ 
induces some partial orderings of its nodes.
For example consider an ordering $\pos$
with positions $0,1,\cdots,h$, so that
each node $v$
at depth $d_v$ 
in the tree is
at position $\pos_v=d_v$ 
in the ordering.
The nodes at the same depth are not ordered in $\pos$, while every two 
nodes at different depths are ordered.
In this sense 
we can interpret the \stprbh\ as %\emph{partial ordering problem} (\pop)
a partial ordering problem %(\pop)
and model it with the following two sets of 
binary %\pop\ 
variables. %, which are introduced in \cite{Jabrayilov2016}.
%%with a little modification.
%
The first set describes the position of each node in the 
ordering, \ie\ 
for each node $v \in V$ and position $i \in \{0,1,\cdots,H\}$ we define two 
variables: 
\begin{align*}
  g_{i,v} =
  \left\{
  \begin{array}[2]{ll}
    1 & \mbox{position of $v$ is greater than $i$, \ie\ } i < \pos_v \\
    %1 & i < \pos_v \\
    0 & \mbox{otherwise} 
  \end{array}
  \right.
  \\
  l_{v,i} =
  \left\{
  \begin{array}[2]{ll}
    1 & \mbox{position of $v$ is less than $i$, \ie\ } \pos_v < i \ \ \ \ \ \\
    %1 & \pos_v < i \\
    0 & \mbox{otherwise} 
  \end{array}
  \right.
\end{align*}

These variables have the property
that if node $v$ is at position $i$, %i.e. $\pos_v=i$,
then its position is neither 
less nor greater than $i$, and thus both variables are 0, \ie\ 
$l_{v,i}=g_{i,v}=0$.
The second set of binary variables describes the edges of the Steiner tree $T$. 
For each edge $uv \in E$ we define the variables:
\begin{gather*}
  x_{u,v} =
  \left\{
  \begin{array}[2]{ll}
    1 & \mbox{ $T$ contains edge $uv$ and } \pos_u < \pos_v \\
    0 & \mbox{ otherwise} 
  \end{array}
  \right.
  \\
  x_{v,u} =
  \left\{
  \begin{array}[2]{ll}
    1 & \mbox{ $T$ contains edge $uv$ and } \pos_v < \pos_u \\
    0 & \mbox{ otherwise} 
  \end{array}
  \right.
\end{gather*}

\begin{figure}[bt]
  \centering
  \begin{tikzpicture}[
  x=1.3 cm, y=1.3 cm,
  V3/.style= {circle, inner sep=3pt, draw, line width=1pt},
  V4/.style= {circle, inner sep=4pt, draw, line width=1pt},
  P/.style= {rectangle, inner sep=1pt, draw},
  E/.style= {->,draw, line width=1},
  %E/.style= {draw, line width=1},
  Pointer/.style= {->, draw, opacity=0.5},
]
  \path [draw, opacity=0.1, step=1.3cm] (-0.2,0.6) grid (2.8,3.3);

  \path (2.7,2.7) node [] {$T$};

  \path (2,3)	node [V4] (r) {$r$};
  \path (1.5,2) node [V4] (a) {$a$};
  \path (3,1)	node [V3] (b) {$b$};
  \path (1,1)	node [V4] (c) {$c$};
  \path (2,1)	node [V3] (d) {$d$};

  \path (r) edge [E] (a);
  \path (r) edge [E] (b);

  \path (a) edge [E] (c);
  \path (a) edge [E] (d);

  \path (0.0,3.2) edge [Pointer, line width=2pt, opacity=0.1] +(0,-2.6) 
  +(-0.3,-2.5) node {$\pos$};

  \path (0, 3) node [P] (p1) {} +(0.3,0) node {$0$};
  \path (0, 2) node [P] (p2) {} +(0.3,0) node {$1$}; 
  % +(-3.0,0) node {$V\setminus r$ in Layers $1\ldots H$ };

  \path (0, 1) node [P] (p3) {} +(0.3,0) node {$2$}; 

\end{tikzpicture}
  \caption{$T$ and $\pos$ with $\pos_v \ge d_v$ for each $v\in V(T)$, 
  \eg\ $\pos_r=d_r=0$, $\pos_b=2 \ge d_b=1$, etc.} 
  \label{fig:depth_pos}
\end{figure}

% zzz
The intuition behind these variables is the following:
According to our basic notions 
an edge $uv$ with some ordering $\pos_u < \pos_v$ can be seen 
as directed arc $\arc uv$.
With this considerations 
$x_{u,v}$ shall be $1$ if and only if $T$ contains an arc $\arc uv$. %,

In our construction, we describe $T$ using edge variables $x$ only, 
\ie\ $T$ consists of the edges $\{u,v\}$ with $x_{u,v}=1$ and the nodes contained
in these edges.
Moreover,
for our construction it is enough to have a partial ordering $\pos$, so that
for each node $v$ at depth $d_v$ in $T$, $\pos_v \ge d_v$ 
(Figure \ref{fig:depth_pos}), \ie\ we do not require $\pos_v = d_v$.
With the binary variables $x,l,g$
we formulate our basic model (\pop) for the \stprbh:
\begin{gather}
   \max \hs \rho_{r} + \sum_{uv \in E} (x_{u,v}\cdot\rho_v  + x_{v,u}\cdot\rho_u )
     \tag{\pop}
     \label{model:pop}
     \\
     \mbox{ subject to } \notag
     \\
    l_{r,0} = g_{0,r} = 0
    \label{constr:root}
    \\
    l_{v,1} = g_{H,v} = 0
    ,\ \forall v \in V \setminus \{r\}
    \label{constr:interval}
    \\
    g_{i,v} - g_{i+1,v}  \ge 0
    ,\ 
    \forall v \in V,\ i = 0,\cdots,H-1
    \label{constr:unambiguous1}
    \\
    g_{i,v} + l_{v,i+1} = 1 
    ,\ \forall v \in V,\ i = 0,\cdots,H-1
    \label{constr:unambiguous2}
    \\
    l_{u,i} + g_{i,v} - x_{u,v} \ge 0
    ,\ \forall uv \in E,\ i = 0,\cdots,H
    \label{constr:edge:forward} 
    \\
    l_{v,i} + g_{i,u} - x_{v,u} \ge 0
    ,\ \forall uv \in E,\ i = 0,\cdots,H
    \label{constr:edge:reverse}  
    \\
    \sum_{u \in N(v)} x_{u,v}  \le 1
    ,\ \forall v \in V \setminus \{r\}
    \label{constr:indeg} 
    \\
    \sum_{u \in N(v) \setminus \{w\}} x_{u,v}  \ge x_{v,w}
    ,\  \forall v \in V \setminus \{r\}, w \in N(v)
    \label{constr:outdeg} 
    \\
    \sum_{uv \in E} c_{uv}(x_{u,v}+x_{v,u})  \le B
    \label{constr:budget}
\end{gather}
% %
% \begin{alignat}{2}
%    % \max & \rho_{r} + \sum_{uv \in E} (x_{u,v}\cdot\rho_v  + x_{v,u}\cdot\rho_u )
%    %   %\notag
%    %   \tag{\pop}
%    %   \label{model:pop}
%    %  \\
%     %& subject to\notag \\
%     & l_{r,0} = g_{0,r} = 0
%     \label{constr:root}
%     \\
%     &l_{v,1} = g_{H,v} = 0
%     & \forall v \in V \setminus \{r\}
%     \label{constr:interval}
%     \\
%     & g_{i,v} - g_{i+1,v}  \ge 0
%     &
%     \forall v \in V,\ i = 0,\cdots,H-1
%     \label{constr:unambiguous1}
%     \\
%     & g_{i,v} + l_{v,i+1} = 1 
%     & \forall v \in V,\ i = 0,\cdots,H-1
%     \label{constr:unambiguous2}
%     \\
%     & l_{u,i} + g_{i,v} - x_{u,v} \ge 0
%     &\forall uv \in E,\ i = 0,\cdots,H
%     \label{constr:edge:forward} 
%     \\
%     &l_{v,i} + g_{i,u} - x_{v,u} \ge 0
%     &\forall uv \in E,\ i = 0,\cdots,H
%     \label{constr:edge:reverse}  
%     \\
%     &\sum_{u \in N(v)} x_{u,v}  \le 1
%     &\forall v \in V \setminus \{r\}
%     \label{constr:indeg} 
%     \\
%     &\sum_{u \in N(v) \setminus \{w\}} x_{u,v}  \ge x_{v,w}
%     & \forall v \in V \setminus \{r\}, w \in N(v)
%     \label{constr:outdeg} 
%     \\
%     &\sum_{uv \in E} c_{uv}(x_{u,v}+x_{v,u})  \le B
%     \label{constr:budget}
% \end{alignat}
%
\begin{lemma}
  Let $l,g \in \{0,1\}^{|V|(H+1)}$ and $x \in \{0,1\}^{2|E|}$ %and %,  $g \in \{0,1\}^{|V|(H+1)}$ 
  be vectors satisfying
  (\ref{constr:root})--(\ref{constr:budget}). Then $x$ describes a tree $T$, 
  which respects the budget. Moreover, if $x \ne 0$, then $T$ contains $r$.
  \label{lemma:budgettree}
\end{lemma}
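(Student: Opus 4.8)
The plan is to extract, from the variable vectors $l,g,x$, a well-defined ordering value $\pos_v \in \{0,1,\dots,H\}$ for each node, then show the edges selected by $x$ form a forest in which every edge is "oriented" consistently with $\pos$, and finally rule out cycles and multiple components containing $r$ using the in-degree/out-degree constraints. First I would observe that for a fixed $v$, constraints (\ref{constr:unambiguous1}) force the sequence $g_{0,v},g_{1,v},\dots,g_{H,v}$ to be non-increasing in $\{0,1\}$, so there is a unique index at which it drops from $1$ to $0$; define $\pos_v$ to be that index (with $\pos_v=0$ if all $g_{i,v}=0$). Constraint (\ref{constr:unambiguous2}) then gives $l_{v,i+1}=1-g_{i,v}$, so $l_{v,i}=1 \iff i>\pos_v$ and $g_{i,v}=1 \iff i<\pos_v$; in particular $l_{v,i}=g_{i,v}=0$ exactly when $i=\pos_v$. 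Constraint (\ref{constr:root}) gives $\pos_r=0$, and (\ref{constr:interval}) forces $1\le \pos_v\le H$ for $v\ne r$ whenever the relevant variables are consistent (more precisely $g_{H,v}=0$ gives $\pos_v\le H$, and $l_{v,1}=0$ gives $\pos_v\ge 1$).

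Next I would show that if $x_{u,v}=1$ then $\pos_u<\pos_v$. Apply (\ref{constr:edge:forward}) with $i=\pos_u$: we get $l_{u,\pos_u}+g_{\pos_u,v}\ge x_{u,v}=1$, and since $l_{u,\pos_u}=0$ this forces $g_{\pos_u,v}=1$, i.e.\ $\pos_u<\pos_v$; symmetrically (\ref{constr:edge:reverse}) handles $x_{v,u}$. A consequence is that for each undirected edge $uv$ at most one of $x_{u,v},x_{v,u}$ is $1$ (they would require $\pos_u<\pos_v$ and $\pos_v<\pos_u$ simultaneously), so $x$ genuinely selects a set of oriented edges. Now view the selected arcs as a directed graph $D$ on $V$. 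Constraint (\ref{constr:indeg}) says every node except $r$ has in-degree at most $1$ in $D$, and $r$ has in-degree $0$ (no arc can point into $r$ since $\pos_r=0$ is minimal, or directly from (\ref{constr:root})). Since every arc strictly increases $\pos$, $D$ is acyclic. An acyclic digraph in which every node has in-degree $\le 1$ is a branching (a forest of out-trees); together with the fact that the edge set of $T$ is the underlying undirected edge set, $T$ is a forest.

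To finish I would argue connectivity of the component containing any selected edge. If $x\ne 0$, pick any node $v$ with an incident selected arc. Since $D$ is a branching, following arcs backward from $v$ (each non-root node has exactly one parent once it has positive in-degree; a node with in-degree $0$ in a branching is a root of its out-tree) we reach a node with in-degree $0$. The key point is constraint (\ref{constr:outdeg}): if a non-root node $w$ has an outgoing arc $x_{w,\cdot}=1$, then $\sum_{u\in N(w)}x_{u,w}\ge 1$, i.e.\ $w$ has an incoming arc, hence in-degree exactly $1$. So every node that is not a leaf-source must be $r$; tracing back from $v$ we cannot stop at a non-root node (it would have an outgoing arc toward the node we came from, forcing an incoming arc too, contradicting in-degree $0$), so the trace terminates at $r$. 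Hence every selected edge lies in the connected component of $r$, and that component is a tree $T$ containing $r$. Finally, $c(T)=\sum_{uv\in E}c_{uv}(x_{u,v}+x_{v,u})\le B$ is immediate from (\ref{constr:budget}). I expect the main subtlety to be the connectivity argument — making precise that (\ref{constr:outdeg}) forbids "floating" sub-branches not attached to $r$ — whereas the forest/acyclicity part is a routine consequence of the monotone $g$ variables and the edge-orientation constraints.
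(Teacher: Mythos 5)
Your proposal is correct and follows essentially the same route as the paper: positions $\pos_v$ are extracted from the monotone $g$-sequences via (\ref{constr:unambiguous1})--(\ref{constr:unambiguous2}), arcs are shown to strictly increase $\pos$ via (\ref{constr:edge:forward})--(\ref{constr:edge:reverse}), acyclicity follows from (\ref{constr:indeg}), connectivity to $r$ from the backward trace enabled by (\ref{constr:outdeg}), and the budget from (\ref{constr:budget}). The only cosmetic differences are your direct backward-trace phrasing versus the paper's contradiction arguments, and a harmless looseness in stating (\ref{constr:outdeg}) with the sum over $N(w)$ instead of $N(w)\setminus\{w'\}$ for the arc target $w'$.
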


\begin{proof}
  If $x = 0$, then $T$ is the empty tree and hence respects the budget. 
  Assume $x \ne 0$.

The equations (\ref{constr:root}) ensure that the root $r$ is at position 0.
The remaining nodes $V\setminus \{r\}$ must be placed between the positions 1 and $H$. 
Constraints (\ref{constr:interval}) take care of this.

By transitivity, if a position of a node is greater than $i+1$ then it is also 
  greater than $i$ (constraints (\ref{constr:unambiguous1})).
 Constraints  (\ref{constr:unambiguous2}) express that each node $v$ is either
  at a position greater than $i$ (\ie\ $g_{i,v}=1$) 
  or less than $i+1$ (\ie\ $l_{v,i}=1$) and not both.
  These constraints jointly with constraints (\ref{constr:unambiguous1}) 
    ensure that each node $v$ will be placed at exactly one position, \ie\ 
    there is no position pair $i \ne j$ with 
    $l_{v,i}=g_{i,v}=0$ and 
    $l_{v,j}=g_{j,v}=0$. 
    We show this by contradiction. 
    Let $l_{v,i}=g_{i,v}=0$.
    In  the case $j<i$,  
      as $l_{v,i}=0$ 
      we have 
      $g_{i-1,v}=1$ by (\ref{constr:unambiguous2}).
      Therefore we have
      $g_{j,v}=1$ for each $j \le i-1$ by (\ref{constr:unambiguous1})
      which is a contradiction to $g_{j,v}=0$.  
    In the case $j>i$, 
	as $g_{i,v}=0$ we have 
	$g_{j,v}=0$ for each $j \ge i$ by (\ref{constr:unambiguous1}).
	Therefore we have
	$l_{v,j+1}=1$
	by
	(\ref{constr:unambiguous2}) leading to
	$l_{v,j}=1$ for each $j\ge i+1$ which is
	a contradiction to $l_{v,j}=0$.

  Constraints 
  (\ref{constr:edge:forward}) 
  and
  (\ref{constr:edge:reverse}) 
  make sure that for each $uv\in E(T)$ the following expression holds:
  $x_{u,v}=1$ iff $\pos_u < \pos_v$.
  If $x_{u,v}=1$ we have 
  $l_{u,i} + g_{i,v} \ge 1$ for each $i \in \{1,\cdots,H\}$
  by (\ref{constr:edge:forward}).
  It is easy to see that for two nodes $u,v$
  we have $\pos_u < \pos_v$
  if $l_{u,i} + g_{i,v} \ge 1$ for each $i \in \{1,\cdots,H\}$.
  It follows that if $x_{uv}=1$ then $\pos_u < \pos_v$.
  Analog, in case $x_{v,u}=1$, the constraints
  (\ref{constr:edge:reverse}) enforce
  $\pos_v < \pos_u$.
  From 
  (\ref{constr:edge:forward}) 
  and
  (\ref{constr:edge:reverse}) 
  the claim follows.

  Constraints 
  (\ref{constr:indeg})
  and
  (\ref{constr:outdeg})
  jointly with
  (\ref{constr:edge:forward}) 
  and
  (\ref{constr:edge:reverse}) 
  ensure that %\alert{$x$ describes a component $T$, which}
  $T$ contains no cycle and is connected, \ie\ $T$ is a tree.
  Moreover, $T$ contains $r$.

  Constraints 
  (\ref{constr:indeg}) make sure that each node has at most one incoming 
  arc in $T$, which
  jointly with
  (\ref{constr:edge:forward}) and
  (\ref{constr:edge:reverse})
  ensure that $T$ is cycle free.
  We show this by contradiction and assume that it contains a cycle $C$. 
  Let $v$ be the node in $C$
  with the greatest position. 
  It has two incident edges 
  $uv$ and $vw$ in $C$.
  Since $v$ has the greatest position in $C$ we have $\pos_u < \pos_v$ and 
  $\pos_w < \pos_v$ and hence 
  $x_{u,v}=1$ 
  and
  $x_{w,v}=1$ by
  (\ref{constr:edge:forward}) and
  (\ref{constr:edge:reverse}).
  That means
  $v$ has two incoming arcs 
  $\arc uv$ and $\arc wv$ in $T$
  contradicting (\ref{constr:indeg}).
  
  Constraints 
  (\ref{constr:outdeg})
  jointly with
  (\ref{constr:edge:forward}) and
  (\ref{constr:edge:reverse})
  make sure that $x$ describes a component $T$, 
  which is connected and contains~$r$, 
  \ie\ for each node $v \in V(T) \setminus \{r\}$ 
  there is a 
  directed path from~$r$ to this node.
  We show this by contradiction.
  Assume $T$ is not connected.
  Then it contains at least one nonempty component $C$ with $r \notin V(C)$.
  Since $x$ is an edge variable, it does not describe a component 
  with isolated nodes,
  and hence $C$ contains at least one arc $\arc vw$.
  Then $C$
  has also at least one incoming arc $\arc {v'}v$ to node $v$ by
  (\ref{constr:outdeg}).
  Due to (\ref{constr:edge:forward}) the position of $v'$ is less 
  than $\pos_v$, \ie\ $\pos_{v'} \le \pos_v -1$.
  If $\pos_{v'}=0$ then $v'=r$, since $r$ is the only node at position~0.
  This contradicts $r \notin V(C)$.
  Else we can repeat this argument and get 
  an incoming arc $\arc {v''}{v'}$ to $v'$ in $C$.
  Notice that the position of $v''$ is now at most $\pos_v-2$. 
  So after repeating this argument at most $\pos_v$ times we reach 
  the position $0(=\pos_v-\pos_v)$.
  Since~$r$ is the only node at position 0, $C$ contains $r$, a contradiction.

  Finally, the constraints 
  (\ref{constr:budget}) ensure that the costs of $T$ do not exceed
  the budget-limit.
  \hfill\rule{5pt}{5pt}
\end{proof}

\begin{lemma} %[Depth and partial ordering]
\label{lemma:depthVsPos}
The model (\pop) constructs a tree $T$ and a partial ordering $\pos$ with:
\begin{description}
\item [\textnormal{(a)}]\emph{For each edge $uv$ of $T$ we have $d_u<d_v$ iff $\pos_u < \pos_v$.}
\item [\textnormal{(b)}]\emph{For each node $v$ of $T$ we have $d_v \le \pos_v$.}
\end{description}
\end{lemma}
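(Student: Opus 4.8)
The plan is to lean heavily on Lemma~\ref{lemma:budgettree}, which already guarantees that $x$ describes a tree $T$ containing $r$ (when $x\neq 0$; the case $x=0$ is vacuous), and that for every edge $uv\in E(T)$ we have $x_{u,v}=1 \iff \pos_u<\pos_v$. So part~(a) reduces to showing that, for an edge $uv$ of the rooted tree $T$, the orientation of the arc chosen by $x$ points away from the root, i.e.\ $x_{u,v}=1 \iff d_u<d_v$. First I would observe that in the proof of Lemma~\ref{lemma:budgettree} it was shown that every node $v\in V(T)\setminus\{r\}$ has exactly one incoming arc (constraint~(\ref{constr:indeg}) gives at most one, and the connectivity argument gives at least one, since $v$ lies on a directed path from $r$). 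Hence for each edge $uv$ of $T$, exactly one of $x_{u,v},x_{v,u}$ is $1$, so the tree edges get a well-defined orientation in which every non-root node has in-degree exactly $1$ and $r$ has in-degree $0$; this is precisely the "orientation away from the root" of the rooted tree $T$. For an edge $uv$ on the $(r,v)$-path with $u$ the parent, this orientation is $\arc uv$, i.e.\ $x_{u,v}=1$, and $d_v=d_u+1>d_u$. Combining with the equivalence from Lemma~\ref{lemma:budgettree} ($x_{u,v}=1\iff\pos_u<\pos_v$) yields $d_u<d_v\iff\pos_u<\pos_v$, which is part~(a).

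For part~(b) I would argue by induction on the depth $d_v$ along the unique $(r,v)$-path in $T$. The base case is $v=r$: constraint~(\ref{constr:root}) forces $l_{r,0}=g_{0,r}=0$, so $\pos_r=0=d_r$. For the inductive step, let $v\neq r$ with parent $u$ on the $(r,v)$-path, so $d_v=d_u+1$ and, by part~(a), $\pos_u<\pos_v$, hence $\pos_v\ge\pos_u+1$. By the induction hypothesis $\pos_u\ge d_u$, so $\pos_v\ge d_u+1=d_v$. (Here I am using that $\pos$ is a genuine position function taking integer values in $\{0,\dots,H\}$, which follows from Lemma~\ref{lemma:budgettree}: constraints~(\ref{constr:unambiguous1})--(\ref{constr:unambiguous2}) place each node at exactly one position, and (\ref{constr:interval}) confines non-root nodes to positions $1,\dots,H$.) This completes the induction and establishes (b).

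The only slightly delicate point — and the one I would state carefully rather than wave at — is the claim underpinning (a) that the $x$-orientation of $T$ coincides with the parent-to-child orientation of the rooted tree. This needs that each non-root node of $T$ has exactly one incoming arc; "at most one" is immediate from~(\ref{constr:indeg}), while "at least one" is exactly what the connectivity half of the proof of Lemma~\ref{lemma:budgettree} establishes (following incoming arcs backwards, the positions strictly decrease, so one reaches position $0$, i.e.\ $r$). Once this is in hand, a short induction on depth shows the $x$-arcs point away from $r$: $r$ has no incoming arc, and if $u$ is reached from $r$ by arcs directed away from $r$ then the unique incoming arc of any child must be the one from $u$. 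The rest is bookkeeping with the equivalences already proved, so I do not expect any real obstacle beyond making this orientation argument precise. \hfill\rule{5pt}{5pt}
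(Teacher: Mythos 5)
Your proof is correct and takes essentially the same route as the paper: both arguments work outward from the root, using constraint (\ref{constr:indeg}) to force every tree edge to be oriented from parent to child (the paper packages this as a BFS induction over the edges proving $d_u<d_v\Rightarrow\pos_u<\pos_v$ directly, you package it as ``the $x$-orientation is the away-from-root orientation''), and then obtain (b) by accumulating the strict, integral position increases along the $(r,v)$-path. The one assertion you leave implicit, that $r$ has no incoming arc, is immediate: $x_{v,r}=1$ would force $\pos_v<\pos_r=0$ via (\ref{constr:edge:forward})--(\ref{constr:edge:reverse}), contradicting (\ref{constr:root}) and (\ref{constr:interval}).
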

\begin{proof}
If $T$ is empty, both statements are clearly satisfied. Assume $T$ is not empty.

\textbf{(a)}
  There are 
  two possibilities:
  either $d_u < d_v $ or $d_v < d_u $.
  The possibilities according to $\pos$ are similar:
  either $\pos_u < \pos_v$ or $\pos_v < \pos_u$.
  Therefore, it is sufficient to show one direction, \ie\ 
  %$\pos_u < \pos_v$ if $d_u < d_v$. 
  $d_u < d_v \Longrightarrow \pos_u < \pos_v$.
  The second direction follows if we reverse the roles of $u$ and $v$.
  We show this by induction over the edges of $T$, whereby we traverse the tree 
  in breadth-first search (BFS) order. 
  Due to Lemma \ref{lemma:budgettree}, $T$ contains the root $r$.
  %, \ie~$d_r=0$.}
  %If $u$ is the root $r$, then 
  If $u = r$, then 
  we have $\pos_u=\pos_r=0$ by~(\ref{constr:root}) 
  and $\pos_v \ge 1$ by
  (\ref{constr:interval}), thus $\pos_u < \pos_v$.
  Else $T$ has an edge $wu$ with $d_w < d_u$.
  The edge has been considered already because of BFS, and hence
  $\pos_w < \pos_u$ holds by induction hypothesis, 
  and thus
  $\pos_v < \pos_u$ is excluded by
  (\ref{constr:indeg}).

\textbf{(b)}
  Let $d_v = l$. 
  Then there is a $(r,v)$-path 
  %$\left( (r=v_0,v_1),(v_1,v_2),\cdots,(v_{l-1},v_l=v) \right)$ with $l$ arcs 
  %$\{v_0,v_1\},\{v_1,v_2\},\cdots,\{v_{l-1},v_l\}$ with $v_0=r$ and $v_l=v$
  %$r,\{r,v_1\},v_1,\{v_1,v_2\},\cdots,\{v_{l-1},v\},v$ with $l$ edges
  $v_0,\{v_0,v_1\},v_1,\cdots,v_{l-1},\{v_{l-1},v_l\},v_l$ 
  with $l$ edges
  %$\{v_0=r,v_1\},\{v_1,v_2\},\cdots,\{v_{l-1},v_l=v\}$ with $l$ edges
  in $T$, where $v_0=r$ and $v_l=v$. 
  Each edge $\{v_{i-1},v_{i}\}$ in this path satisfies
  $d_{v_{i-1}} < d_{v_{i}}$ and thus 
  $\pos_{v_{i-1}} < \pos_{v_{i}}$ 
  %(i.e. $\pos_{v_{i}} - \pos_{v_{i-1}} \ge 1$, since the positions are integers) 
  by (a), %first statement,
  %$\pos_{v_{i-1}} < \pos_{v_{i}}$ implies 
  and hence
  $\pos_{v_{i}} - \pos_{v_{i-1}} \ge 1$, since the positions are integers.
  It follows 
  $d_v = l \le 
  (\pos_{v_1}-\pos_{v_0})
  +(\pos_{v_2}-\pos_{v_1})
  +\ldots
  +(\pos_{v_l}-\pos_{v_{l-1}}) 
  =(\pos_{v_l}-\pos_{v_0}) 
  =(\pos_{v}-\pos_{r}) 
  =\pos_{v}$.  %as stated.
%\end{enumerate}

  \hfill\rule{5pt}{5pt}
\end{proof}

\begin{theorem}
  The basic model (\pop) computes an optimal solution to the \stprbh. %Steiner tree. 
\label{theorem:pop}
\end{theorem}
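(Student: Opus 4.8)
The plan is to show the two directions of optimality separately: every feasible solution of the \stprbh\ corresponds to a feasible point of (\pop) with the same objective value, and conversely every feasible point of (\pop) yields a feasible \stprbh\ solution with the same objective value. Taken together, the optimum of (\pop) equals the optimum of the \stprbh, and an optimal ILP solution can be decoded into an optimal Steiner tree.

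First I would handle the ``converse'' direction, which is almost entirely done by the lemmas already proved. Given a feasible $(l,g,x)$ for (\ref{constr:root})--(\ref{constr:budget}), Lemma~\ref{lemma:budgettree} tells us that $x$ describes a tree $T$ respecting the budget, and that $T$ contains $r$ whenever $x\ne 0$ (the case $x=0$ gives the trivial tree $\{r\}$, which is feasible). It remains to check the hop constraint: by Lemma~\ref{lemma:depthVsPos}(b), $d_v\le\pos_v$ for every $v\in V(T)$, and constraints (\ref{constr:interval}) together with (\ref{constr:unambiguous1})--(\ref{constr:unambiguous2}) force $\pos_v\le H$, so $d_v\le H$ for all nodes of $T$; hence $T$ is a Steiner tree. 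Finally I would verify that the objective $\rho_r+\sum_{uv\in E}(x_{u,v}\rho_v+x_{v,u}\rho_u)$ equals $\sum_{v\in V(T)}\rho_v$: each non-root node $v\in V(T)$ is the head of exactly one arc (by (\ref{constr:indeg}) and the connectivity argument in Lemma~\ref{lemma:budgettree}, every non-root tree node has exactly one incoming arc), contributing $\rho_v$ exactly once, and $r$ contributes the constant $\rho_r$; nodes not in $T$ contribute nothing.

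For the forward direction, take an optimal Steiner tree $T^\ast$ with root $r$, and root it at $r$. I would set $\pos_v:=d_v$ for $v\in V(T^\ast)$, and for $v\notin V(T^\ast)$ assign any position in $\{1,\dots,H\}$ (say $\pos_v:=1$), which is possible since $H\ge 1$. Then define $g_{i,v}:=[\,i<\pos_v\,]$, $l_{v,i}:=[\,\pos_v<i\,]$, and for each edge $uv\in E(T^\ast)$ set $x_{u,v}:=1$ if $d_u<d_v$ and $x_{v,u}:=1$ if $d_v<d_u$ (exactly one holds in a rooted tree, since adjacent nodes have consecutive depths), with all other $x$-variables $0$. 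I would then check the constraints one by one: (\ref{constr:root}) and (\ref{constr:interval}) hold by the choice of positions ($\pos_r=0$, $\pos_v\ge 1$ for $v\ne r$, and $\pos_v=d_v\le H$ for $v\in V(T^\ast)$ by the hop constraint while $\pos_v=1\le H$ otherwise); (\ref{constr:unambiguous1}) and (\ref{constr:unambiguous2}) are immediate from the definition of $g$ and $l$ via $\pos$; (\ref{constr:edge:forward})/(\ref{constr:edge:reverse}) hold because $x_{u,v}=1$ implies $\pos_u<\pos_v$, so $l_{u,i}=1$ for $i>\pos_u$ and $g_{i,v}=1$ for $i<\pos_v$, covering all $i$; (\ref{constr:indeg}) holds since each non-root node has a unique parent in $T^\ast$; (\ref{constr:outdeg}) holds since any arc $x_{v,w}=1$ leaving $v$ (with $v\ne r$) means $w$ is a child of $v$, so $v$ has a parent $v'$ with $x_{v',v}=1$ and $v'\ne w$; and (\ref{constr:budget}) is exactly $c(T^\ast)\le B$. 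The objective value of this point equals $\sum_{v\in V(T^\ast)}\rho_v$ by the same head-counting argument as above.

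The main obstacle, such as it is, is bookkeeping rather than conceptual: one must be careful that the edge-orientation constraints (\ref{constr:edge:forward})--(\ref{constr:edge:reverse}) are only \emph{implications} ($x_{u,v}=1\Rightarrow\pos_u<\pos_v$), not equivalences, so the forward direction needs the explicit construction above to also set the ``right'' $x$-variable to $1$ and must confirm that doing so does not violate (\ref{constr:indeg})/(\ref{constr:outdeg}); symmetrically, in the converse direction one must invoke Lemma~\ref{lemma:budgettree} for the fact that the decoded $T$ is genuinely a tree with each non-root node having in-degree exactly one, which is what makes the revenue sums match. With both inequalities between the two optima established, equality follows, proving the theorem.
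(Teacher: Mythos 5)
Your proposal is correct and follows essentially the same route as the paper: the decoding direction rests on Lemma~\ref{lemma:budgettree} and Lemma~\ref{lemma:depthVsPos} plus constraints (\ref{constr:interval}) for the hop bound, and the objective is matched by the in-degree-one argument, exactly as in the paper's proof. The only difference is that you spell out the encoding of a feasible Steiner tree as an ILP point constraint by constraint, where the paper simply asserts that every feasible solution satisfies (\ref{constr:root})--(\ref{constr:budget}); your elaboration is a valid (and slightly more careful) filling-in of that step, not a different approach.
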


  % The basic model (\pop) constructs a tree $T$, which is rooted by $r$
  % and respects the budget. 
  % So we need only show that $T$
  % respects the hop limit $H$ too, 
  % \ie\ the depth $d_v$ of each node $v$ of $T$ is at most $H$.
  % That follows from the lemma \ref{lemma:depthVsPos} and 
  % the constraints (\ref{constr:interval}).
  % Since we have $d_v \le \pos_v$ by the lemma and 
  % $\pos_v \le H$ by the constraints.
\begin{proof}
  Due to Lemma \ref{lemma:budgettree}, the basic model (\pop) constructs a tree $T$, 
  which is rooted by $r$
  and respects the budget. 
  $T$ respects also the hop limit $H$, \ie\ the depth $d_v$ of 
  each node $v$ of $T$ is at most $H$.
  This follows from Lemma \ref{lemma:depthVsPos} and 
  constraints (\ref{constr:interval}),
  since we have $d_v \le \pos_v$ by the lemma and 
  $\pos_v \le H$ by the constraints.
  
  So we only need to show that $T$ is optimal.
  Clearly, every feasible solution satisfies the constraints 
  (\ref{constr:root})--(\ref{constr:budget}).  
  % If any optimal Steiner tree has the empty edge set, \ie\ $x=0$, 
  % then a tree $T$ with $V(T)=\{r\}$ is optimal, since $\rho_r \ge 0$.
  % %If $x=0$, there is no solution $T$ with $E(T) \ne \emptyset$. 
  % If $x=0$, then any solution tree $T$ has empty edge set, %, we have $E(T) = \emptyset$. 
  If $x=0$, then $E(T) = \emptyset$. In this case
  a tree $T=(\{r\}, \emptyset)$ 
  has value $\rho_r$ and 
  is optimal, 
  since $\rho_r \ge 0$.
  Thus the objective satisfies $\sum_{v \in V(T)} \rho_v = \rho_{r}$. 
  Else $x \ne 0$, \ie\ $T$ has some edges. 
  In this case $r \in V(T)$ by Lemma \ref{lemma:budgettree}.
  %
  % The objective value is at least $\rho_r$,
  % \ie\ $T$ contains at least the root $r$.
  % First, there is no constraint preventing this. 
  % On the other hand, 
  % %\alert{Moreover,}
  % every non empty solution, \ie\ Steiner tree with at least 
  % one node, contains the root, and the empty solution 
  % has objective value 0, which is not greater than $\rho_r$.
  % %
  Moreover, 
  each node $v \ne r$ in the tree % besides the root $r$
  has exactly one incoming arc $\arc uv$,
  and thus 
  we have $\rho_v = \sum_{uv \in E} x_{u,v} \cdot \rho_v$.
  Hence the objective satisfies
  $\sum_{v \in V(T)} \rho_v = 
  \rho_{r} + \sum_{uv \in E} (x_{u,v}\cdot\rho_v  + x_{v,u}\cdot\rho_u ).$
  \hfill\rule{5pt}{5pt}
\end{proof}

\noindent\textit{Model size.}
The basic model (\pop) has $2(H+1)|V| + 2|E|$  
binary variables and $O(|V|H+|E|H)$ constraints. 
Notice that 
the equations 
(\ref{constr:root}),
(\ref{constr:interval})
and 
(\ref{constr:unambiguous2})
can be used to eliminate
all $l$ variables and the variables $g_{0,v}$, $g_{H,v}$ for each $v \in V$.
Moreover, 
(\ref{constr:root})
and 
(\ref{constr:unambiguous1}) fix the variable $g_{i,r}$ 
for each $i\in \{0,\cdots,H\}$.
The number of remaining variables then is $(H-1)(|V|-1) + 2|E|$.

% zzz
%            _                   _   _             
%  ___ _   _| |__  ___  ___  ___| |_(_) ___  _ __  
% / __| | | | '_ \/ __|/ _ \/ __| __| |/ _ \| '_ \ 
% \__ \ |_| | |_) \__ \  __/ (__| |_| | (_) | | | |
% |___/\__,_|_.__/|___/\___|\___|\__|_|\___/|_| |_|
% sssssssssssssssssssssssssssssssssssssssssssssssssssssssssssssssssssssssssssss
%\subsection{Some other constraints}
\subsection{Strengthening constraints.}
\label{ssec:otherconstraints}
%We also used the following $|V|+|V|+|V|+1+|E|$ constraints: % to our model:
As one can see the depth $d_v$ of a node $v$ in $T$ is at least $len(r,v)$.
This is used in 
%the \emph{start/end-layer (SEL)} %test from 
\cite{DBLP:journals/mpc/SinnlL16}
to fix some variables.
We can apply the idea as follows:
From Lemma \ref{lemma:depthVsPos} 
follows $len(r,v) \le d_v \le \pos_v$, 
\ie\ 
%the position of the node $v$ in the partial ordering is greater than 
$\pos_v > len(r,v)-1$.
Hence we set  
for each $v \in V \setminus \{r\}$ with $len(r,v)\le H$:
\begin{align}
g_{len(r,v)-1,v} =1
%\mbox{for each } v \in V \setminus \{r\} \mbox{ with } len(r,v)\le H.
\label{constr:lenpos}
\end{align}
The equations imply also $g_{i,v}=1$ for each $i < len(r,v)-1$
by (\ref{constr:unambiguous1}).

Let $v$ be a node of $T$ with $\pos_v=H$.
Then there is no edge $vw$ in $T$ with $d_w > d_v$, otherwise it would be
$\pos_w > \pos_v$ by Lemma \ref{lemma:depthVsPos} and thus $\pos_w \ge H+1$.
Hence $v$ is a leaf node in $T$.
If the revenue of this node is $\rho_v=0$ we can remove it from $T$ 
without changing the objective value of $T$. 
Therefore we can require from each node $v$ with 
revenue $\rho_v=0$ and $len(r,v) \le H-1$
that 
$\pos_v \le H-1$, \ie\ we set
for each $v \in V$ with $\rho_v=0$ and $len(r,v)\le H-1$:
\begin{gather}
g_{H-1,v}=0.
\label{constr:nonterminal-innode}
\end{gather}
% \begin{align}
% g_{H-1,v} &=0 & 
% %\mbox{for each } v \in V \mbox{ with } \rho_v=0 \mbox{ and } len(r,v)\le H-1.
% \label{constr:nonterminal-innode}
% \end{align}
%These are at most $|V|$ new equations.
%}

%Let $H>0$ and %$\delta(r)=\{ rv \in E \colon v \in V \}$.
%let the set $M=\{ rv \in E \colon c_{r,v} \le B\}$ be not empty. 
%Then the following (exactly one) constraint is valid:
Let $H \ne 0$ and $M:=\{ rv \in E \colon c_{r,v} \le B\} \ne \emptyset$.
Then the following constraint is valid:
\begin{gather}
%\sum_{e \in \delta(r), c_e \le B} x_e \ge 1.
\sum_{e \in M} x_e \ge 1.
\label{constr:deltaroot}
\end{gather}

Consider some leaf $v$ with $\rho_v>0$ at depth $d_v < H$ of $T$. 
According to Lemma \ref{lemma:depthVsPos}, its position $\pos_v$ holds 
$d_v \le \pos_v \le H$. To break this type of symmetries we require 
$\pos_v=H$, \ie\ % $g_{H-1,v}=1$:
for each $v \in V \setminus \{r\}$  with $\rho_v>0$ we have:
\begin{gather}
%\sum_{w \in N(v)} x_{v,w} &\ge 1-g_{H-1,v} & \mbox{for each } v \in V \setminus \{r\} \mbox{ with } \rho_v>0.
\sum_{w \in N(v)} x_{v,w} \ge 1-g_{H-1,v}.
%\tag{leaf}
\label{constr:terminalleaf}
\end{gather}
%These are less than $|V|$ constraints.
If $v$ is a leaf node 
then it has no outgoing arc $\arc vw$ and thus the left hand side
is 0. This forces $g_{H-1,v}=1$ as desired on the right hand side.
Notice that %besides of the leaves of $T$ these constraints also place
%these constraints 
(\ref{constr:terminalleaf})
set $\pos_v=H$ also for each
node $v \in V \setminus V(T)$ with $\rho_v>0$ and removes some more symmetries.

The number of the strengthening inequalities is at most $|V|$.
%( (\ref{constr:terminalleaf}) and (\ref{constr:deltaroot}) ).
The constraints 
(\ref{constr:lenpos})
and 
(\ref{constr:nonterminal-innode})
are equations, which fix some variables.

%            _                   _   _             
%  ___ _   _| |__  ___  ___  ___| |_(_) ___  _ __  
% / __| | | | '_ \/ __|/ _ \/ __| __| |/ _ \| '_ \ 
% \__ \ |_| | |_) \__ \  __/ (__| |_| | (_) | | | |
% |___/\__,_|_.__/|___/\___|\___|\__|_|\___/|_| |_|
% sssssssssssssssssssssssssssssssssssssssssssssssssssssssssssssssssssssssssssss

%\subsection{Similarities with and differences from \SL}
\subsection{Comparison with the \SL model.} 
The \SL model \cite{DBLP:journals/mpc/SinnlL16} 
constructs a Steiner tree~$T'$, which is rooted at $r$ and is 
a subtree of the \emph{layered graph} 
$G_L$, % associated to $G$. Whereas %$G_L$ has , whisb. % and is rooted at $r$. 
whereby
$V(G_L)=V_0 \cup\cdots\cup V_H$ with $V_0=\{r\}$ and 
$V_1=\cdots=V_H=V \setminus \{r\}$, \ie\ 
in $G_L$ the root $r$ is on layer 0, and nodes $V_i$ 
on layers $i \in \{1,\cdots,H\}$.
There is an edge only between the nodes in consecutive layers, 
\ie\ $E(G_L)=E_1 \cup \cdots \cup E_H$, where for each edge $uv\in E$ and 
for each $i \in \{1,\cdots,H\}$,\ 
$E_i$ contains the edge between $u \in V_{i-1}$ and $v \in V_i$.
Moreover, 
a node $v$ at depth $i$ in $T'$ is selected from $V_i$, where $1\le i \le H$,
and thus the tree respects the hop limit $H$. 
To describe the layers of the selected nodes,
the \emph{assignment variables} $y_{v,i}$ are defined
for each node $v \in V \setminus \{r\}$ and layer $i\in \{1,\cdots,H\}$, 
which are 1 if $v$ is selected from  $V_i$, and 0 otherwise.
%If $v$ is a node of $T'$ and $y_{v,i}=1$ then it means that the node 
% To describe the nodes of the Steiner tree $T'$ the binary variable $y_v$ for
% each node $v$ is used, which is 1 if $v \in V(T')$, and 0 otherwise.
To indicate whether a node or an edge is a part of $T'$ %the Steiner tree 
additional $|V|+2|E|$ binary variables are used.
% 
% The algorithm contains a basic ILP with polynomial number of constraints,
% which has been enlarged by additional exponential number of 
% \emph{subtour eliminations constraints}.
% The algorithm uses then a sophisticated \bc technique to solve the resulting ILP. 
% 
The algorithm contains a basic ILP with a polynomial number of constraints,
which has been enlarged by additional exponential number of 
\emph{subtour eliminations constraints}, and solve the resulting ILP 
with a sophisticated \bc technique.

% \alert{
%   There is the following similarity between the both approaches.
%   We can interpret the layers as the positions of a partial ordering $\pos'$.
%   \SL uses the assignment variables $y$ to describe the positions of the 
%   nodes of $T'$
%   in $\pos'$. %some partial ordering.
%   The connection between $y$ 
%   and the \pop variables $l,g$ are the following:
%   If a node $v$ is on position $i$, then $y_{v,i}=1$, while the corresponding
%   \pop variables are $l_{v,i}=g_{i,v}=0$.
%   %Between $\pos'$ and the partial ordering $\pos$ constructed 
% %
% %
% }
% 
% The constructions have the following difference.
% %Let $\pos'$ the partial ordering corresponding to the layering %constructed
% %by \SL, and
% Let $T'$ and $\pos'$ be the tree and the ordering constructed by \SL,
% and
% let $T$ and $\pos$ be the tree and the ordering constructed by~(\pop).
% Then the depth $d'_v$ of each node $v$ in $T'$ satisfies $d'_v = \pos'_v$,
% while the depth $d_u$ of each node $u$ in $T$ satisfies $d_u \le \pos_u$ 
% by Lemma \ref{lemma:depthVsPos}, 
% which makes 
% the using of a symmetry breaking constraints (\ref{constr:terminalleaf}) possible.

  There is the following connection between both approaches.
  We can interpret the layers as positions of a partial ordering. % $\pos'$.
  To describe the positions, 
  the \SL model uses the assignment variables $y$,
  while our approach uses
  the \pop\ variables $l,g$, whereby
  if a node $v$ is on position $i$, then $y_{v,i}=1$, but %while the corresponding
  %\pop variables are 
  $l_{v,i}=g_{i,v}=0$.
  Moreover, the orderings $\pos'$ and $\pos$, which are constructed by 
  \SL and \pop, respectively, have the following difference.
  For each node $v$ at depth $d'_v$ in $T'$, $\pos'$ satisfies $\pos'_v=d'_v$,
  while for each node $u$ at depth $d_u$ in $T$, we have $\pos_u \ge d_u$
  by Lemma \ref{lemma:depthVsPos}, 
  which makes 
  the use of the symmetry breaking constraints (\ref{constr:terminalleaf}) possible.

% Our model is based on graph layering too, since the layering $V_0,\cdots,V_H$
% can be seen as a partial ordering of the nodes with $H+1$ positions, so that 
% a node $v$ is at layer $V_i$ if and only if $v$ is at position $i$ of
% the ordering.
As we mentioned above, our model has $(H-1)(|V|-1) + 2|E|$ binary variables
and $O(|V|H+|E|H)$ constraints,
while the \SL\ model has $H(|V|-1)+|V|+2|E|$ binary variables, and an exponential number of
constraints.

\section{Reducing the problem size}
\label{sec:preprocessing}
To reduce the problem size we introduce
a new preprocessing technique with running time~$O(|V|^2)$,
which extends
the \emph{undirected root cost (URC)}
test from \cite{DBLP:journals/mpc/SinnlL16}.

\subsection{Extended undirected root cost test (EURC).}
\label{sec:eurc}
  %While we used the SEL test just as described above, 
  %we extended the URC technique. 
  %\emph{Undirected root cost (URC) test} 
  One of the reduction techniques used 
  by Sinnl and \Ljubic \cite{DBLP:journals/mpc/SinnlL16}
  is the URC reduction, which is described
    %by Sinnl and Ljubic \cite{DBLP:journals/mpc/SinnlL16}
    as follows:
    Let $u$ and $v$ be two neighbours 
    of the root $r$ so that $uv \in E$. As one can see, if 
    %its cost 
    the cost $c_{uv}$ of the edge $uv$
    is higher than 
    the costs of the edges $ru$ and $rv$, then we can remove $uv$ from the graph.
    Since we can connect the root with each of the nodes $u$ and $v$ cheaper
    than $c_{uv}$, we can construct the optimal Steiner tree $T$ without 
    this edge. %the edge $uv$. 
  We extended the URC test as follows:

\begin{lemma}[EURC test]
\label{lemma:eurc}
    Let $uv$ be some edge in $G$ with $u \ne r$ and $v \ne r$, and let
    $P_{r,u}$ and $P_{r,v}$
    be two paths in $G$ 
    from the root to $u$ and $v$, respectively.
    The reduced graph $G \setminus uv$ contains a Steiner tree with 
    the same objective value as the value of an optimal solution $T$ in~$G$, if
    both the following conditions are satisfied:
\begin{align}
  c(P_{r,v}) \le c_{uv} \mbox{ \ \ and \ \ } |E(P_{r,v})| \le len(r,u) + 1
\label{prep:1} \\
  c(P_{r,u}) \le c_{uv} \mbox{ \ \ and \ \ } |E(P_{r,u})| \le len(r,v) + 1
\label{prep:2}
\end{align}
\end{lemma}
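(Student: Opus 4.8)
The plan is to start from an optimal Steiner tree $T$ of $G$, rooted at $r$, and to construct a feasible Steiner tree $T'$ of $G\setminus uv$ with $V(T')\supseteq V(T)$. Since all revenues are non-negative this gives $\sum_{w\in V(T')}\rho_w\ge\sum_{w\in V(T)}\rho_w$, the right-hand side being the optimum, while $T'$ is feasible in $G\setminus uv\subseteq G$ and hence cannot beat the optimum; so $T'$ realises exactly the optimal value, which is the assertion. If $uv\notin E(T)$ we are done with $T'=T$, so assume $uv\in E(T)$.

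Since $uv$ is an edge of the rooted tree $T$, one of $u,v$ is the parent of the other, and deleting $uv$ splits $T$ into the component $T_1\ni r$ and the subtree $T_2$ rooted at the child endpoint; write $d_z$ for the depth of a vertex $z$ in $T$, so $d_z\le H$ for every $z\in V(T)$ because $T$ is feasible. Conditions (\ref{prep:1}) and (\ref{prep:2}) are interchanged by renaming $u\leftrightarrow v$, so we may assume $u\in V(T_1)$ and $v$ the root of $T_2$ and use (\ref{prep:1}); the case $v\in V(T_1)$ is identical using (\ref{prep:2}) and $P_{r,u}$. The construction reconnects $T_2$ to the root along $P_{r,v}$ instead of through $u$: set $G'':=T_1\cup T_2\cup P_{r,v}$ and check three points. (a) Cost: $c(G'')\le c(T_1)+c(T_2)+c(P_{r,v})=\bigl(c(T)-c_{uv}\bigr)+c(P_{r,v})\le c(T)\le B$, using $c(P_{r,v})\le c_{uv}$. (b) Hops: in $G''$ every vertex lies within $H$ edges of $r$ — a vertex of $V(T_1)$ along its path to $r$ in $T$, which stays inside $T_1$ and has at most $H$ edges; a vertex of $V(P_{r,v})$ along a prefix of $P_{r,v}$, of length $|E(P_{r,v})|\le len(r,u)+1\le d_u+1=d_v\le H$; and a vertex $w\in V(T_2)$ by following $P_{r,v}$ from $r$ to $v$ ($\le d_v$ edges) and then the $T_2$-path from its root $v$ down to $w$ ($d_w-d_v$ edges), at most $d_v+(d_w-d_v)=d_w\le H$ edges in all. (c) Since $u$ and $v$ lie in the disjoint vertex sets $V(T_1)$ and $V(T_2)$, the edge $uv$ is in neither $E(T_1)$ nor $E(T_2)$, so $uv\in E(G'')$ could only mean $uv\in E(P_{r,v})$; as $v$ is an endpoint of $P_{r,v}$ this would force $uv$ to be its last edge, and we may assume this does not occur — the paths arising in the preprocessing are taken inside $G\setminus uv$, and for an arbitrary path the bound $c(P_{r,v})\le c_{uv}$ makes this case degenerate and easy to route around. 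Hence $G''\subseteq G\setminus uv$.

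Granting (a)--(c), take $T'$ to be a breadth-first spanning tree of $G''$ rooted at $r$: then every vertex has depth in $T'$ equal to its hop-distance from $r$ in $G''$, hence $\le H$ by (b); moreover $c(T')\le c(G'')\le B$ by (a), $r\in V(T')$, and $V(T')=V(G'')\supseteq V(T)$, so $T'$ is a Steiner tree of $G\setminus uv$ of the required value. Apart from the bookkeeping in (c), everything is routine; the two places needing attention are the hop count in (b) — where the detour through $P_{r,v}$ followed by the tail of $T_2$ must not overshoot $H$, which is exactly what the length hypotheses $|E(P_{r,v})|\le len(r,u)+1$ (and symmetrically for $P_{r,u}$) guarantee via $d_v=d_u+1$ — and the edge-exclusion (c); I expect the hop count to be the conceptual heart of the argument.
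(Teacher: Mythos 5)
Your construction is essentially the paper's own argument: delete $uv$ from the optimal tree $T$, obtaining $T_1 \ni r$ and $T_2$ rooted at the deeper endpoint, reattach $T_2$ through $P_{r,v}$, bound the cost of the union by $c(P_{r,v}) \le c_{uv}$, bound the hop distances using $|E(P_{r,v})| \le len(r,u)+1 \le d_u+1 = d_v$ together with the tail of $T_2$, and finally extract a spanning tree of the union rooted at $r$. Your cost and hop computations, the treatment of the vertices of $P_{r,v}$ themselves, and the closing optimality argument (value at least that of $T$ because $V(T)\subseteq V(T')$ and revenues are nonnegative, at most the optimum by feasibility) all match the paper's proof, in places slightly more explicitly than the paper states them.

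The one genuine gap is your point (c). The lemma is stated for \emph{arbitrary} paths $P_{r,u},P_{r,v}$ in $G$ satisfying (\ref{prep:1}) and (\ref{prep:2}), so you must prove, not assume, that $P_{r,v}$ avoids the edge $uv$; the remark that ``the paths arising in the preprocessing are taken inside $G\setminus uv$'' is not among the hypotheses, and ``easy to route around'' does not work as written: if $uv$ were the last edge of $P_{r,v}$, its prefix would be an $(r,u)$-path ending in $T_1$, which gives no connection to $v$ and hence does not reattach $T_2$. The intended argument --- and the reason the statement requires $u \ne r$ and $v \ne r$ --- is that the case is impossible: if $uv \in E(P_{r,v})$, then since $v$ occurs only as the last vertex, $uv$ is the last edge, and since the path starts at $r \ne u$ it contains at least one further edge $e$, whence $c(P_{r,v}) \ge c_e + c_{uv} > c_{uv}$ because edge costs are strictly positive, contradicting (\ref{prep:1}). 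With that one observation inserted in place of ``we may assume this does not occur,'' your proof is complete and coincides with the paper's.
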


\begin{proof}
The intuition behind the test is that these two paths can be used as 
alternatives to the edge $uv$. In this sense we call them \emph{alternate paths}.
%If $T$ does not contains an edge $uv$, 
If $uv \notin E(T)$
then $T$ is a subtree of $G\setminus uv$,
and we are done. So suppose $uv \in E(T)$.
%Let $T$ some optimal Steiner tree, which contains an edge $uv$. 
Assume without loss of generality that in $T$ the depth of $u$ is less than 
the depth of $v$.
In this case we need (\ref{prep:1}), otherwise (\ref{prep:2}).
Removing this edge from $T$, decomposes it into two subtrees $T_1$ and $T_2$.
The cost of $T$ satisfies:
\begin{gather*}
c(T) = c(T_1) + c_{uv} + c(T_2).
\end{gather*}
The union of $T_1, T_2$ and $P_{r,v}$
%i.e. 
induces a connected subgraph $C$ of $G$,
which consists of
node set
$V(T_1) \cup V(P_{r,v}) \cup V(T_2)$ 
and edge set
$E(T_1) \cup E(P_{r,v}) \cup E(T_2)$. 
Since
$r \notin \{u,v\}$
the path 
$P_{r,v}$ does not contain the edge $uv$, otherwise $P_{r,v}$ 
would contain at least one other edge $e$ than $uv$ and cost 
$c(P_{r,v}) \ge c_e + c_{uv} > c_{uv}$ since $c_e \in \setR^+$.
The component $C$ contains all nodes of $T$, since 
$V(T)=V(T_1) \cup V(T_2) \subseteq V(C)$. 
We show that $C$ contains a subtree $T'$, 
which contains 
all nodes of $T$,
and respects the budget and hop constraints.
According to the budget constraint, we show that any subtree $T'$ of $C$
does not cost 
more than $T$:
\begin{align*}
c(T') &\le 
c(C)=c(T_1) + c(P_{r,v}) + c(T_2) \\
     & \le c(T_1) + c_{uv} + c(T_2) = c(T).
\end{align*}
Hereby, the second inequality follows from (\ref{prep:1}), \ie\ $c(P_{r,v}) \le c_{uv}$.
According to the hop constraint, we show that
for each node $w$ at depth $k$ in $T$ 
the component $C$ has a $(r,w)$-path with at most $k$ edges.
Let $W_{r,w}$ be the $(r,w)$-path in $T$, \ie\  
$|E(W_{r,w})|=k$.
If $w$ is in $T_1$, then 
this path 
is also 
in $C$, since $E(T_1) \subseteq E(C)$. 
So suppose $w$ is in $T_2$. 
The path $W_{r,w}$ 
consists of a subpath $W_{r,u}$ from $r$ to $u$, the edge $uv$ and 
a subpath $W_{v,w}$ from $v$ to $w$.
Since
$|E(W_{r,u})| \ge len(r,u)$
we have
$k=|E(W_{r,w})| \ge len(r,u)+1 + |E(W_{v,w})|$.
The $(r,w)$-path in $C$ consisting of 
the subpath 
$P_{r,v}$ and $W_{v,w}$ has at most
$len(r,u)+1 + |E(W_{v,w})| \le k$ edges, as $|E(P_{r,v})| \le len(r,u) + 1$ 
by (\ref{prep:1}). 
  \hfill\rule{5pt}{5pt}
\end{proof}
% 
% % From the proof follows that 
% % removing the edges by EURC does not change the length
% % of shortest unweighted paths.

\subsubsection{Applying EURC.}
Let $e$ be an edge, which has alternate paths in $G$. 
According to lemma \ref{lemma:eurc}
after removing $e$ from $G$, the reduced graph $G'$ % $G \setminus e$ 
contains an optimal solution, too.
Thus we can apply the EURC reduction on $G'$ as well. %$G \setminus e$ as well.
In this way we can iterate over all edges of $G$ once
and remove each edge, which has alternate paths.
More precisely, 
we compute for each $v \in V$ one weighted shortest $(r,v)$-path in $G$,
where every edge $e \in G$ has weight $c_e$.
We then iterate over all edges of $G$ once
and remove the edge $uv$ if
the $(r,u)$-path and $(r,v)$-path are alternate paths to $uv$.
%Applying EURC this way on all edges 
This takes 
%$O(|V|^2)$ 
$O(|E| + |V|\log|V|)$ 
time if we use the Dijkstra algorithm to 
compute the single source shortest paths.

Suppose we apply the EURC test on a reduced graph $G'$ after some edge removal. 
%Notice that the test depends on the shortest paths in $G'$.
%
Since the test depends on the shortest paths in $G'$,
we need to show for correctness that
for each $v \in V$ the weighted shortest $(r,v)$-paths in $G$ exist in $G'$, too
(Lemma \ref{lemma:eurc:weightedpath}),
and that the length of the unweighted  shortest $(r,v)$-path in $G'$ is the same as
$len(r,v)$ (Lemma \ref{lemma:eurc:unweightedpath}).

\begin{lemma}%[Fast EURC test]
  \label{lemma:eurc:weightedpath}
  Let $v \in V$ and let $G'$ be a reduced graph after removing some edges 
  by EURC.
  The weighted shortest $(r,v)$-path $P_{r,v}$ 
  in $G$ exists in $G'$, too.
\end{lemma}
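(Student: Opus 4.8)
The plan is to show directly that the EURC sweep never deletes any edge of the precomputed shortest path $P_{r,v}$; since an edge is removed only when that very edge passes the EURC test, this proves $P_{r,v}$ survives into $G'$. If $v=r$ the path is empty and there is nothing to prove, so assume $v\ne r$ and write $P_{r,v}=(r=w_0,w_1,\dots,w_k=v)$ with $k\ge 1$. First I would fix an arbitrary edge $\{w_{i-1},w_i\}$ of this path and set $a:=w_{i-1}$, $b:=w_i$; the nodes $w_0,\dots,w_k$ are distinct, so $b\ne r$. If $i=1$, \ie\ $a=r$, then $\{a,b\}$ is incident to $r$ and hence lies outside the scope of the EURC test, because Lemma~\ref{lemma:eurc} only deletes edges whose both endpoints differ from $r$; such an edge is never removed, so we may assume $a\ne r$ as well.

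The second step uses the optimal-substructure property of weighted shortest paths. Since $P_{r,v}$ is a shortest $(r,v)$-path, its prefix $(w_0,\dots,w_i)$ is a shortest $(r,b)$-path, so $c(P_{r,b})$, the cost of the precomputed shortest $(r,b)$-path, equals $c(w_0,\dots,w_i)=c(w_0,\dots,w_{i-1})+c_{ab}$. Because $a\ne r$, the prefix $(w_0,\dots,w_{i-1})$ contains at least one edge, and as all edge costs lie in $\setR^+$ we get $c(w_0,\dots,w_{i-1})>0$, hence $c(P_{r,b})>c_{ab}$. When the sweep then examines $\{a,b\}$ as a deletion candidate, it uses the precomputed paths $P_{r,a}$ and $P_{r,b}$ as the alternate paths; by Lemma~\ref{lemma:eurc} deletion requires both~(\ref{prep:1}) and~(\ref{prep:2}), and no matter how the endpoints $a,b$ are matched to the roles $u,v$ in that lemma, one of the two required conditions reads $c(P_{r,b})\le c_{ab}$, which we have just contradicted. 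Hence the EURC test fails on $\{a,b\}$ and the edge is not removed.

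Since this holds for every edge of $P_{r,v}$, the whole path is untouched by the sweep and therefore is a path of the reduced graph $G'$, which is the claim. I expect the only delicate point to be the bookkeeping: making explicit that for an edge of $P_{r,v}$ the endpoint nearer to $r$ along the path also has the smaller weighted distance from $r$ (optimal substructure), and pointing out that the EURC test is always evaluated with the shortest paths and the $len(\cdot,\cdot)$ values read off the \emph{original} graph $G$, so that the argument is independent both of which other edges have already been deleted and of the order in which edges are processed. No genuinely hard step is anticipated.
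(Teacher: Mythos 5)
Your proof is correct, and it differs from the paper's only in how the contradiction is organized. The paper argues globally: assuming an edge $ww'$ of $P_{r,v}$ was removed, the removal criterion supplies an $(r,w')$-path of cost at most $c_{w,w'}$, and splicing it with the suffix of $P_{r,v}$ beyond $w'$ gives an $(r,v)$-connection strictly cheaper than $P_{r,v}$ (strict because $w\ne r$ forces a nonempty, positive-cost prefix), contradicting minimality. You instead localize the argument at the far endpoint of each edge: by optimal substructure the prefix of $P_{r,v}$ up to $b$ is a shortest $(r,b)$-path, so its cost is $c(w_0,\dots,w_{i-1})+c_{ab}>c_{ab}$ whenever $a\ne r$, and hence the cost condition that (\ref{prep:1}) or (\ref{prep:2}) imposes on endpoint $b$ can never be satisfied, while edges incident to $r$ lie outside the scope of Lemma \ref{lemma:eurc} altogether. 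The ingredients are identical (strictly positive costs, exemption of root-incident edges, minimality of $P_{r,v}$); what your version buys is that it avoids the paper's explicit splice (which, strictly speaking, yields a walk that must be shortened to a path -- a subtlety you instead hide inside the standard optimal-substructure fact), and since the precomputed $(r,b)$-path is a shortest path, your inequality rules out \emph{any} alternate $(r,b)$-path of cost at most $c_{ab}$, so the conclusion is independent of which alternate paths the sweep uses and of the processing order, matching the generality of the paper's statement. Both arguments are sound; the paper's is marginally more self-contained, yours is more local and makes the order-independence explicit.
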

\begin{proof}
Assume $P_{r,v}$ does not exist in $G'$
%this is not true 
and there is an edge $ww' \in E(P_{r,v}) \setminus E(G')$. 
W.l.o.g. $P_{r,v}$ consists of a $(r,w)$-path $P_1$, an edge $ww'$
and a $(w',v)$-path $P_2$.
Since EURC does not remove an edge $ww'$ if $r \in \{w,w'\}$,
we have $|E(P_1)|>0$ and thus $c(P_{1})>0$.
Hence we have %the cost of $P_{r,v}$ is 
$c(P_{r,v})=c(P_1)+c_{w,w'}+c(P_2) > c_{w,w'}+c(P_2)$.
As $ww'$ is removed by EURC, there is a %weighted shortest 
$(r,w')$-path
$P_{r,w'}$ in $G$ satisfying %the EURC condition (\ref{prep:1}), i.e.
$c(P_{r,w'}) \le c_{ww'}$.
The paths $P_{r,w'}$ and $P_2$ build a $(r,v)$-path in $G$, which costs
$c(P_{r,w'})+c(P_2) \le c_{w,w'}+c(P_2) < c(P_{r,v})$ contradicting
that $P_{r,v}$ is a weighted shortest $(r,v)$-path in $G$.
  \hfill\rule{5pt}{5pt}
\end{proof}

\begin{lemma}
  \label{lemma:eurc:unweightedpath}
  Let $v \in V$. Removing the edges by EURC does not change
  the length of an unweighted shortest $(r,v)$-path.
\end{lemma}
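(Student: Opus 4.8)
The plan is to argue, in the same spirit as the proof of Lemma~\ref{lemma:eurc:weightedpath}, that every unweighted shortest $(r,v)$-path of $G$ can be repaired inside the reduced graph $G'$ without increasing its number of edges. Since $G'\subseteq G$, deleting edges can only enlarge distances, so the only thing to establish is that, for each $v$ reachable from $r$ (for unreachable $v$ both sides are $\infty$ and there is nothing to prove), $G'$ still contains an $(r,v)$-path with at most $len(r,v)$ edges.

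First I would fix $v$ and, among all $(r,v)$-paths of $G$ with exactly $len(r,v)$ edges, pick one, say $P$, that uses the fewest edges deleted by EURC. The claim is that $P$ uses none of them, whence $P$ is a path of $G'$ and we are done. Suppose instead that $P$ traverses a deleted edge; write $P$ as an $(r,w)$-subpath $P_1$, followed by the deleted edge $ww'$, followed by a $(w',v)$-subpath $P_2$, where $w,w'$ are named in the order $P$ visits them. Because EURC deleted $ww'$, both conditions (\ref{prep:1}) and (\ref{prep:2}) hold for this edge; in particular the weighted shortest $(r,w')$-path $P_{r,w'}$ that EURC used as an alternate path satisfies the hop bound $|E(P_{r,w'})|\le len(r,w)+1$ from~(\ref{prep:1}). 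By Lemma~\ref{lemma:eurc:weightedpath}, this path $P_{r,w'}$ survives in $G'$, so it contains no deleted edge.

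Next I would concatenate $P_{r,w'}$ with $P_2$ to obtain an $(r,v)$-walk $Q$ whose length is $|E(P_{r,w'})|+|E(P_2)|\le (len(r,w)+1)+|E(P_2)|\le (|E(P_1)|+1)+|E(P_2)| = |E(P)| = len(r,v)$, where I used $|E(P_1)|\ge len(r,w)$ because $P_1$ is an $(r,w)$-path in $G$. Extracting a simple $(r,v)$-path $P'$ from $Q$ only removes edges, so $|E(P')|\le len(r,v)$, which forces $|E(P')|=len(r,v)$. Moreover the deleted edges appearing on $P'$ are among those on $P_2$, which is a proper subset of the deleted edges on $P$ (it is missing $ww'$). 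Hence $P'$ is a shortest $(r,v)$-path of $G$ using strictly fewer deleted edges than $P$, contradicting the choice of $P$. Therefore $P$ avoids all deleted edges, $P$ lies in $G'$, and the unweighted distance from $r$ to $v$ in $G'$ is at most $|E(P)| = len(r,v)$; combined with the trivial reverse inequality this gives equality.

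I expect the only genuine subtlety — and the point worth stating explicitly in the write-up — to be that EURC's deletion rule is phrased via shortest paths of the \emph{original} graph $G$, so one must cite Lemma~\ref{lemma:eurc:weightedpath} to be sure the alternate path $P_{r,w'}$ actually still exists in $G'$. The ``fewest deleted edges'' choice of $P$ is precisely the device that makes this single-edge exchange self-terminating, even though EURC may remove many edges in sequence; there is no need to set up an induction over the deletion order.
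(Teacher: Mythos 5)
Your proof is correct and follows essentially the same route as the paper's: replace the prefix of a shortest $(r,v)$-path ending at a deleted edge $ww'$ by the alternate path $P_{r,w'}$, whose hop bound comes from condition (\ref{prep:1})/(\ref{prep:2}) and whose survival in $G'$ comes from Lemma~\ref{lemma:eurc:weightedpath}. Your extremal choice of $P$ (a shortest path with fewest deleted edges) is a slightly cleaner way of handling the fact that many edges may be removed in sequence, a point the paper's write-up treats only implicitly by fixing ``the edge whose removal changed the distance.''
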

\begin{proof}
Let $uw$ be the edge, so that after its removal 
the length of an unweighted shortest $(r,v)$-path 
$len'(r,v)$ in the reduced graph $G'$ has changed, \ie\ 
$len'(r,v) \ne len(r,v)$. 
Let $P$ be a $(r,v)$-path in $G$ with $|E(P)|=len(r,v)$, which has been destroyed
by removing the edge %$uw$, i.e. 
$uw \in E(P)$.
W.l.o.g. $P$ consists of a $(r,u)$-path, the edge $uw$
and a $(w,v)$-path $W$.
%Since EURC does not remove an edge $uw$ if $r \in \{u,w\}$ 
We have $len(r,v)=|E(P)|=len(r,u) +1 + |E(W)|$.
As $uw$ has been removed by EURC there is a path $P_{r,w}$ in $G$ with 
$|E(P_{r,w})| \le len(r,u) + 1$ by (\ref{prep:1}).
Due to lemma \ref{lemma:eurc:weightedpath} this path exists in $G'$. % too.
The paths $P_{r,w}$ and $W$ build a $(r,v)$-path in $G'$
with
length $|E(P_{r,w})| +|E(W)| \le len(r,u) + 1 + |E(W)| = len(r,v)$. 
% From $len'(r,v) \le |E(P_{r,w})| +|E(W)|$ 
% follows $len'(r,v) \le len(r,v)$. 
% Since $G'$ is a subgraph of $G$ we have $len'(r,v) \ge len(r,v)$. 
From $len'(r,v) \le |E(P_{r,w})| +|E(W)|$ it
follows that $len'(r,v) \le len(r,v)$. 
%and $len'(r,v) \ge len(r,v)$ (since $G'$ is a subgraph of $G$)
%follows the claim.
Since $G'$ is a subgraph of $G$ we have $len'(r,v) \ge len(r,v)$,
and thus
$len'(r,v) = len(r,v)$.
  \hfill\rule{5pt}{5pt}
\end{proof}
%

% zzz
%                  _ 
%   _____   ____ _| |
%  / _ \ \ / / _` | |
% |  __/\ V / (_| | |
%  \___| \_/ \__,_|_|
%                    
\section{Computational Results}
\label{sec:evaluation}
% As we mentioned above,
% in order to evaluate our new approach, we compared it with
% the sophisticated \bc\ algorithm 
% suggested by Sinnl and Ljubic \cite{DBLP:journals/mpc/SinnlL16}.
% %
% We have implemented our model 
% %in three configurations:
% in the following configurations:
% %once with and once without the EURC reduction:
% 
We implemented our model with the Gurobi-python API
and the reduction with 
the python library \url{http://networkx.readthedocs.io}.
%
%We tested it with four settings:
We tested four variants of our model:

\begin{description}
  \item[\pop1] 
  %contains 
  consists of
  the basic model (\pop), % with constraints 
  (\ref{constr:root})--(\ref{constr:budget})
  and the strengthening constraints 
  (\ref{constr:lenpos})--(\ref{constr:deltaroot}),
  %
  % Notice that \pop1 contains beside the basic model
  % exactly one additional inequality (\ref{constr:deltaroot}). 
  % The constraints 
  % (\ref{constr:lenpos})
  % and 
  % (\ref{constr:nonterminal-innode})
  % are equations and fix some variables.
  where (\ref{constr:lenpos})
  and 
  (\ref{constr:nonterminal-innode})
  are equations for fixing some variables,
  and 
  (\ref{constr:deltaroot}) is exactly one inequality.

  \item[\pop2] 
  extends \pop1 by the strengthening inequalities (\ref{constr:terminalleaf}).

  \item[\pop1\eurc, \pop2\eurc] are \pop1 and \pop2, respectively,
  after applying the EURC reduction. 
\end{description}
The source code of these models %our model
is available 
on our benchmark site \cite{POP:BENCHMARK}.
We compared our models %ILP %it 
with the sophisticated \bc algorithm 
suggested by Sinnl and \Ljubic \cite{DBLP:journals/mpc/SinnlL16}.
%
% The algorithm is called \emph{nacut} in \cite{DBLP:journals/mpc/SinnlL16}
As we mentioned in \autoref{sec:eurc} their algorithm uses another reduction techniques.
Their source code, which is implemented in C++, is publically available, too.
% %We compared \SL\ algorithm with our approach on the same machine:
% Our new approach and \SL\ algorithm were evaluated on the same machine:
% %
% %which is described below.
% %
% %
% %
The comparisons were performed with a time limit of 3 hours 
on an 
Intel Core i7-4790, 3.6 GHz, with 32 GB of memory and running Ubuntu
Linux 16.04. 
%For solving 
To solve
our models, we used Gurobi 6.5.1
single-threadedly with parameter MIPGap=$10^{-5}$ 
(default value is $10^{-4}$). 

%             _                
%   ___  ___ | |_   _____ _ __ 
%  / __|/ _ \| \ \ / / _ \ '__|
%  \__ \ (_) | |\ V /  __/ |   
%  |___/\___/|_| \_/ \___|_|   
%                              
% 
For the experiments we used
the DIMACS benchmark set \cite{DIMACS11BENCHMARK}.
The set consists of 414 instances, which are created
based on the B and C classes of the OR-Library \cite{ORLib} for the STP.
Given a STP graph, a \stprbh\ instance has 
the same graph and edge costs. % as STP. 
The root is selected from the terminals. 
An instance originating from an instance $I$ of B class has 
the file name format $I$-$R$-$H$, e.g. ``B01-5-3.stp'', where
$H$ is the hop limit, and $R$ is a positive integer.
A node $v$ has a randomly selected revenue $\rho_v \in [1,R]$
if it is a terminal node, and $\rho_v = 0$ otherwise.
The budget is $B:=\frac1b \sum_{e \in E} c_e$,
where $b \in \{5,10\}$ by convention.
An instance originating from the C class contains $b$ as well, 
\ie\ it has the file name format $I$-$R$-$b$-$H$. %, e.g. ``C01-10-10-5.stp''.
%For more details see the description on DIMACS benchmark site \cite{DIMACS11BENCHMARK}.

%
%      _        _   _     _   _          
%  ___| |_ __ _| |_(_)___| |_(_) ___ ___ 
% / __| __/ _` | __| / __| __| |/ __/ __|
% \__ \ || (_| | |_| \__ \ |_| | (__\__ \
% |___/\__\__,_|\__|_|___/\__|_|\___|___/
%                                        
% Sinnl-Ljubic:  
%   C17-10-5000-25.stp   V=500  E=12500  
%   ./runSingle.sh C17-10-5000-25.stp
%	load+preprocess             0.0101s
% 
%   ./runSingle.sh C17-10-100-25.stp
%   init model                  0.0841s
%
%   Also Sinnl-Ljubic braucht mit C++ braucht fur reductionenen und initialisiurung 
%   weniger als  0.0101+0.0841=0.0942 secunden

% \newpage
% \begin{figure}[h]
%   \centering
% \begin{tikzpicture}
%     \path (0,0) node {\includegraphics[scale=1.1]{plots/chronology_number_solved_8.pdf}};
%     \path (0,-2.5) node {time limit [min]};
%     \path (-7,0) node [rotate=90] {\#solved instances};
% \end{tikzpicture}
%   \caption{Number of the solved instances by the three models within different time limits} 
%   \label{fig:chronology}
% \end{figure}

% FIG
\begin{figure*}[h]
  \centering
  \begin{tikzpicture}
    \path (-4.3,0) node {\includegraphics[scale=1]{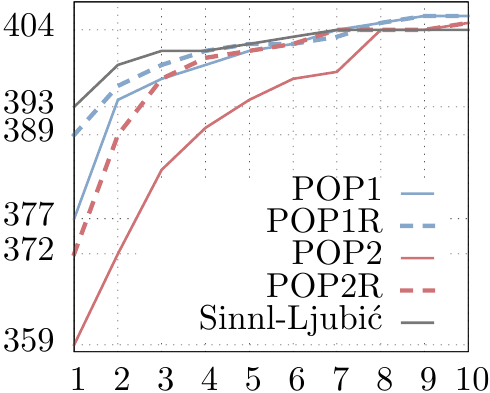}};
    \path (-7.1,0) node [rotate=90] {\#solved instances};
    \path (-4.0,-2.5) node {time limit [min]};

    \path ( 4.4,0) node {\includegraphics[scale=1]{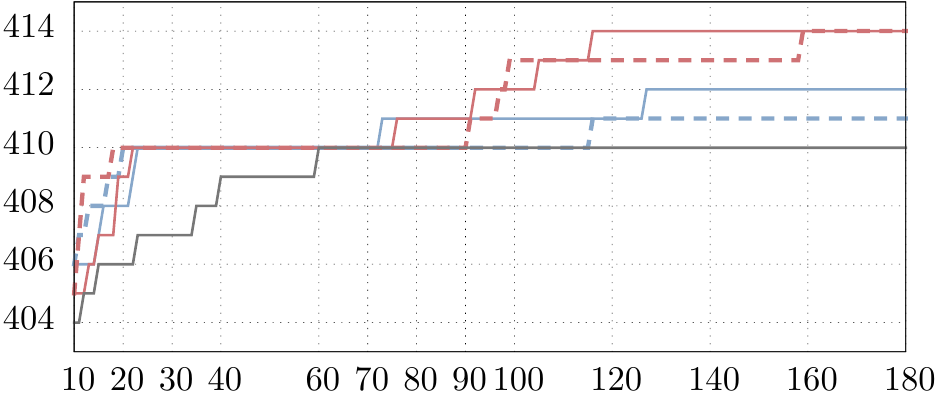}};
    \path (-0.7,0) node [rotate=90] {\#solved instances};
    \path ( 4.5,-2.5) node {time limit [min]};

  \end{tikzpicture}
  \caption{Number of the solved instances by the five models within different time limits} 
  \label{fig:chronology}
\end{figure*}

\begin{table*}[bt]
  \setlength{\tabcolsep}{5.4pt} 
  \begin{tabular}{l|rr|r|rrrr|rrr}
    \hline
              &       &       &    &\pop1   &\pop1\eurc &\pop2   &\pop2\eurc &     &     &\SL\\ 
     Instance & $|V|$ &$|E|$  &Opt &Time[s] &Time[s]    &Time[s] &Time[s]    &lb &ub&Time[s] \\ 
    \hline
    \textbf{C10-100-20-15}    &500  &1000   &\textbf{5906}  &488   &427   &465   &378   &5906  &5967  &tl    \\
\textbf{C10-10-20-15}     &500  &1000   &\textbf{573}   &246   &402   &378   &572   &573   &576   &tl    \\
\textbf{C18-100-1000-5}   &500  &12500  &\textbf{3320}  &7574  &6946  &6919  &9486  &3320  &3368  &tl    \\
C18-10-1000-5             &500  &12500  &318            &4366  &tl    &4505  &5881  &318   &318   &3563  \\
\textbf{C20-100-1000-15}  &500  &12500  &\textbf{5222}  &1283  &997   &764   &684   &5222  &5233  &tl    \\
C20-100-1000-5            &500  &12500  &4768           &tl    &tl    &6297  &5417  &4768  &4768  &1328  \\
C20-10-1000-5             &500  &12500  &460            &tl    &tl    &5474  &5802  &460   &460   &667   \\
 
    \hline
     \#unsolved   & & &  &2 &3 &0 &0 & & &4 \\ 
     \hline
  \end{tabular}
  %\caption{All DIMACS instances which takes more than a hour by any of the models}
  \caption{All DIMACS instances, which remain unsolved by any of the five models}
  \label{tab:resultsDIMACS:hard}
\end{table*}

\subsection{Performance of the ILPs.}
%
%The algorithms were evaluated with time limit of 3 hours. 
%our benchmark site \cite{POP:BENCHMARK}.
Figure \ref{fig:chronology} %summarizes the results of the three algorithms
%for the 414 DIMACS instances.  It 
visualizes 
for %\pop1, \pop2 and \SL\ 
each model 
the number of instances,
which can be solved 
within a time limit of 
$1,2,\ldots,180$ minutes.
%
% Due to clarity the figure does not show the results of \pop1\eurc, 
% and \pop2\eurc, since the results of all models seems similar.
For the result of each single instance see 
%our benchmark site \cite{POP:BENCHMARK} or 
the Appendix.
As we can see, all five models 
can solve the majority of the instances
in a short time.
% 
% Within the time limit of 
% a minute \SL\ 
% solves 
% 394 instances,
% while \popeurc\ and \pop\ solve 373   and  360 respectively.
% 
Within the time limit of 
a minute 
\pop1,  \pop1\eurc,  \pop2,  \pop2\eurc\  and  \SL\  solve  
377,  389,  359,  372  and  393  instances  respectively.
%
% 
% 410 out of 414 instances takes less than 20 minutes with the new model,
% while \SL\ needs about 1 hour for this. More precisely,
% 
% by \pop1\eurc\ and \pop2\eurc\
% 1027 seconds by \pop2\eurc\
% and 
% 1349,
% 1192,
% 1289,
% 3563 seconds by
% \pop1, \pop1\eurc, \pop2, \SL\ respectively.
% 
410 of 414 instances take at most 
1027 seconds by \pop2\eurc\
and 
1349,
1192,
1289,
3563 seconds by
\pop1, \pop1\eurc, \pop2, \SL\ respectively.
For the remaining 4 instances, \pop2 and \pop2\eurc need more than an hour
but at most 6919 and 9486 seconds, respectively.
Within the time limit of 3 hours 
\pop1, \pop1\eurc and \SL left 2, 3 and 4 instances unsolved, respectively.
Up to our knowledge, \pop2 and \pop2\eurc\ solve all instances, including 
4 unsolved instances by \SL\ for the 
first time. The results of these 4 instances are shown bold in
Table \ref{tab:resultsDIMACS:hard}. 
The table contains all instances, which remain unsolved by
any of the five models.
%
% The table contains also all instances, which take more than a hour with
% any of our ILP variants.
% 
Columns 1--3 show the instance names and sizes.
Column 4 displays the optimal values. 
Columns 5--8 contain the runtimes of our models. %ILP variants.
% \pop1,  \pop1\eurc,  \pop2 and  \pop2\eurc\  
% respectively.
%
Columns 9--11 display the lower and upper bounds as well as the 
running time of \SL.
%runtime of \SL.
The times in the table are given in seconds.
An entry
``tl'' %(time limit)
indicates that 
the time limit is reached. %, and 
%the corresponding instances remain unsolved.
The last row shows the number of the unsolved instances for each model.

Since the size of our ILP depends on the instance properties 
$|V|,|E|$ and the hop limit $H$,
it is interesting to see the runtimes 
depending on these properties. 
For this experiment
%In the following
we ignored the 7 outliers from Table \ref{tab:resultsDIMACS:hard}
and considered the remaining 407 DIMACS instances.

% FIG
\begin{figure*} %[b]
  \centering
\begin{tikzpicture}
  \path (-3.0,0) node {\includegraphics[scale=1.0]{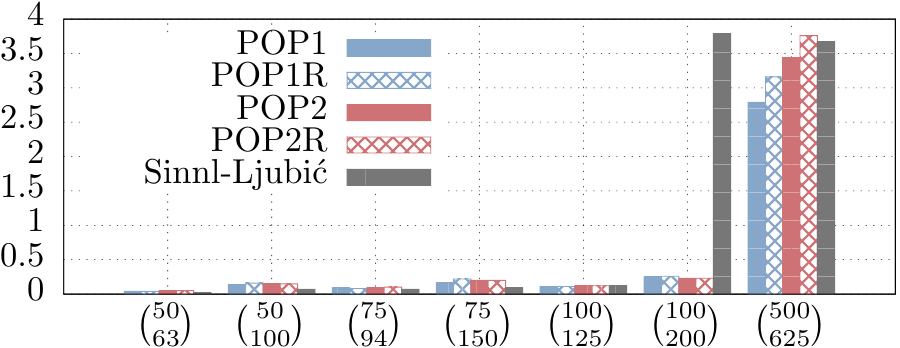}};
  \path (-2.7,-2.2) node {graphs with size ${|V|}\choose{|E|}$};
  \path (-8.0,0.2) node [rotate=90] {average time [s]};

  \path ( 5.5,0) node {\includegraphics[scale=1.0]{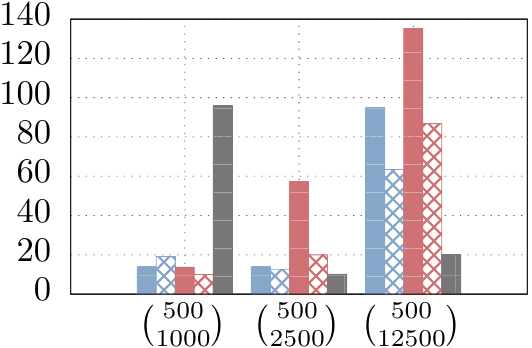}};
  \path ( 5.9,-2.2) node {graphs with size ${|V|}\choose{|E|}$};
  \path ( 2.4,0.2) node [rotate=90] {average time [s]};

\end{tikzpicture}
  \caption{Average running times of the models depending on the graphs sizes}
  \label{fig:histogram:tVE}
\end{figure*}

% FIG
\begin{figure*}[bt]
  \centering
  \begin{tikzpicture}
    \path (-3.0,0.0) node {\includegraphics[scale=1]{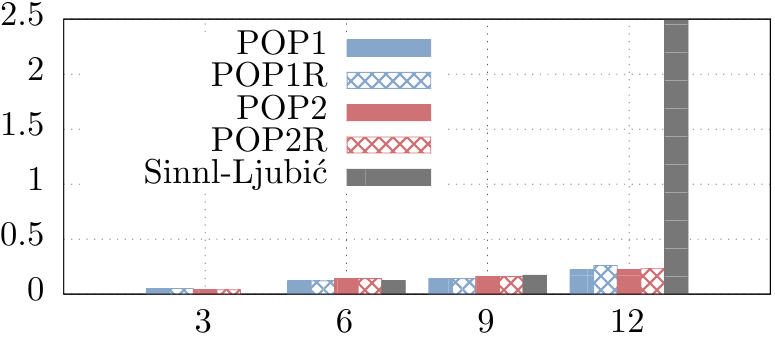}};
    \path (-7.35,0.1) node [rotate=90] {average time [s]};
    \path ( 5.5,0.0) node {\includegraphics[scale=1]{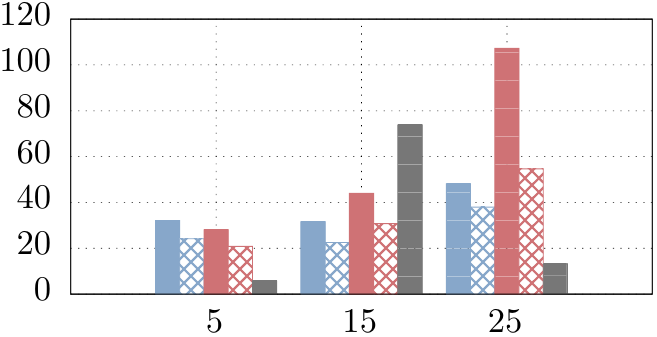}};
    \path ( 1.7,0.1) node [rotate=90] {average time [s]};
    \path (-2.7,-2.0) node {hop limit $H$};
    \path (5.9,-2.0) node {hop limit $H$};
  \end{tikzpicture}
  \caption{Average running times of the models depending on the hope limit $H$}
  \label{fig:histogram:tH}
\end{figure*}

First we consider the dependency of the runtime on the graph sizes.
The DIMACS 
instances have exactly 10 different sizes $(|V|,|E|)$. 
%
%We ignored the 7 outlier instances from Table \ref{tab:resultsDIMACS:hard}
%and grouped the remaining 407 graphs according to their sizes.
We grouped the 407 graphs according to their sizes.
%and divided the remaining 407 graphs into the 10 groups.
%We divided the 407 graphs into the 10 groups.
Figure \ref{fig:histogram:tVE} visualizes 
for each group the average runtime of each model in seconds.
For the graphs up to 500 nodes and 625 edges except the group $(100,200)$
the runtimes of all models seem to be similar.
%
%At this point 
% Notice that the reduction and the writing the ILP in our 
% implementations are done with a python script, while in \SL\ with C++.
% %
% %
% Our models with the reduction need for both operations together 
% on average 2.1 seconds and up to 13.7 seconds 
% for large instances, % with size (500,12500). %500 nodes and 12500 edges,
% while the models without the reduction 
% %\pop1 and \pop2 
% need for writing the ILP 
% on average 3.4 and at most 19.2 seconds. 
For the graphs with sizes 
$(100,200)$ and 
$(500,\numprint{1000})$ 
the new models seem to be faster.
For the graphs with sizes 
$(500,\numprint{2500})$ and
$(500,\numprint{12500})$ 
\SL\ seems to be better.
However, according to Table \ref{tab:resultsDIMACS:hard} 
it does not solve all of the instances with sizes $(500,\numprint{12500})$.
%
% Moreover, we can observe that, 
% with respect to the longest average running time over all 10 groups
% % %                                   POP+A+EURC
% % we can observe that \popeurc\ (with 89.61 seconds for group (500,12500)) 
% %                                   POP+EURC
% \pop1\eurc\ (with 85.48 seconds for group (500,12500)) 
% is faster than \SL\ (95.61 seconds for group (500,1000)). 
% 
%
With respect to the worst average runtime over all 10 groups
\pop1\eurc (63.46 seconds for group (500,\numprint{12500})) 
is the fastest, % than 
%\pop1 (94.76 seconds for group (500,12500)),
%\pop2 (134.88 seconds for group (500,12500)),
and \pop2\eurc (86.82 seconds for group (500,\numprint{12500})) the second fastest model.
%and \SL (95.7 seconds for group (500,1000)) the third fastest model. 
However, according to Table \ref{tab:resultsDIMACS:hard} 
\pop1\eurc does not solve all instances.
\pop1\eurc with 18.3 seconds has also the shortest average runtime over all 
407 graphs, while \pop1, \pop2, \pop2\eurc, \SL need on average 
24.2, 39.1, 23.1, 20.1 seconds, respectively.
We can see also the advantage of the EURC reduction for large graphs:
%
% {$500\\\choose{1000}$}   13.76  19.07  13.63   10.00  95.71  0.72(=13.76s/19.07s) 1.36(=13.63s/10.00s)
% {$500\\\choose{2500}$}   13.82  12.62  57.17   19.96  10.08  1.10(=13.82s/12.62s) 2.86(=57.17s/19.96s)
% {$500\\\choose{12500}$}  94.76  63.46  134.88  86.82  20.11  1.49(=94.76s/63.46s) 1.55(=134.88s/86.82s)
% 
%
% 
% \pop1\eurc is for the groups (500,2500) and (500,12500) on average
% respectively 1.1(=13.82s/12.62s) and 1.49(=94.76s/63.46s) 
% times faster than \pop1,
% while \pop2\eurc 
% respectively 2.86(=57.17s/19.96s) and 1.55(=134.88s/86.82s) 
% times faster than \pop2.
% 
For the groups (500,\numprint{2500}) and (500,\numprint{12500}) 
it speeds up \pop1, 
respectively, 1.1(=13.82s/12.62s) and 1.49(=94.76s/63.46s) times, 
and \pop2
respectively 2.86(=57.17s/19.96s) and 1.55(=134.88s/86.82s) times on average.
%This shows the was expected, since EURC reduces the graph size and thus the ILP size more than \pop.
%

%
%The observation that for these graphs
%\popeurc\ is faster than \pop, leads to the following explanation:
%As we mentioned above the new models have $O(|V|H+|E|H)$ constraints.
%With increasing graph size operating with all of these constraints at once 
%costs more time for a MIP solver.
%\pop-EURC is faster, since it reduces the problem size better 
%and has less constraints than \pop.
%Even though \SL\ seems to be better for the large graphs, 
%we want %to remind that it does not solve all of the instances with sizes 
%$(500,1000)$ and $(500,12500)$.

We now consider the dependency of the runtime on the hop limits $H$.
The DIMACS instances have exactly 7 different hop limits. %$H$.
We grouped the 407 instances according to their hop limits.
Figure \ref{fig:histogram:tH} visualizes 
the average runtime of each model in seconds for each group.
The instances with $H \in \{3,6,9,12\}$ originate from 
class B of the OR-Library \cite{ORLib} and have up to 
100 nodes and 200 edges. 
For the groups with $H \in \{3,6,9\}$ 
the runtimes of all models
seem to be similar,
while for $H=12$ the new models are better.
The instances with $H \in \{5,15,25\}$ originate from 
class C of the OR-Library and have up to 
500 nodes and \numprint{12500} edges.
For the groups with $H \in \{5,25\}$ \SL seems to be faster than the 
new models. 
%We want to remind 
However, according to Table \ref{tab:resultsDIMACS:hard} 
it left one instance with $H=5$ unsolved. 
For $H=15$ the new models are faster than \SL, where the latter 
also left three instances unsolved.
%
% H 
% 5   32.13  24.17  28.06   20.85  5.70   1.33(=32.13s/24.17s) 1.35(=28.06s/20.85s)
% 15  31.62  22.48  44.03   30.72  73.74  1.41(=31.62s/22.48s) 1.43(=44.03s/30.72s)
% 25  48.17  37.92  107.20  54.64  13.11  1.27(=48.17s/37.92s) 1.96(=107.20s/54.64s)
%
%Moreover, considering the longest running time over all 7 groups, 
Moreover, considering the worst runtime over all seven groups, 
%we can observe that 
\pop1\eurc
with 37.92 seconds for $H=25$
is the fastest, % than 
\pop1 (48.17 seconds for group $H=25$) the second fastest and
\pop2\eurc (54.64 seconds for group $H=25$) the third fastest model.
However, according to Table \ref{tab:resultsDIMACS:hard} 
\pop1 and \pop1\eurc do not solve all instances.
%
%
%
%
%Observing the results of the new algorithms for this class
%we can see that with increasing $H$ the running times 
%are increasing too.
% We can see the advantage of the EURC reduction also for the large hop limit:
% \popeurc\ is on average 
% %         POP     POPA+EURC
% % 1.66 ($=104.75s/63.21s$) 
% %       POPA    POPA+EURC
% 2.17 (=137.16s/63.21s)
% times faster than \pop, for the instances with $H=25$.
%
%Also here 
We can see the advantage of EURC for large $H$, too. %hop limits:
%\popeurc\ is for the groups $H=5,15,25$ on average
%%      POP+EURC/POP
%1.35(=27.67s/20.44s), 1.41(=43.19s/30.72s), 1.95(=104.75s/53.62s)
%times faster than \pop\ respectively.
%
%
%We can also see the advantage of the EURC reduction for large hop limits:
%%
% 
% \pop1\eurc is for the groups $H=5,15,25$ on average
% respectively 
% 1.33(=32.13s/24.17s), 1.41(=31.62s/22.48s), 1.27(=48.17s/37.92s)
% times faster than \pop1,
% while \pop2\eurc 
% respectively 
% 1.35(=28.06s/20.85s), 1.43(=44.03s/30.72s), 1.96(=107.20s/54.64s)
% times faster than \pop2.
% 
% 
For the groups $H\in \{15,25\}$ it speeds up \pop1,
respectively, 1.41(=31.62s/22.48s), 1.27(=48.17s/37.92s)
times, 
and \pop2,
respectively, 1.43(=44.03s/30.72s), 1.96(=107.20s/54.64s)
times on average.
%

%
%   
%   \clearpage
%   
%   
\subsection{Strength of the LP relaxation.}

% FIG
\begin{figure*}[bt]
  \centering
  \begin{tikzpicture}
    \path (-6.5,0.0) node {\includegraphics[scale=0.95]{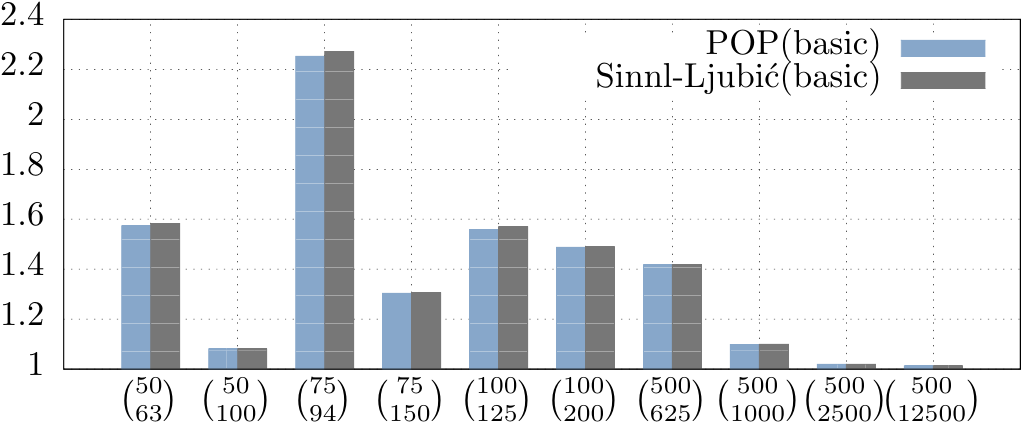}};
    \path (-11.8,0.1) node [rotate=90] {$LP/OPT$};
    \path ( 2.3,0.05) node {\includegraphics[scale=0.95]{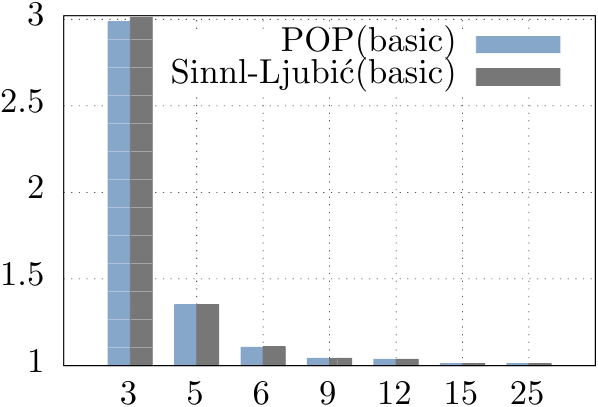}};
    \path (-0.8,0.1) node [rotate=90] {$LP/OPT$};
    \path (+2.7,-2.4) node {hop limit $H$};
    \path (-6.0,-2.4) node {graph size ${|V|}\choose{|E|}$};
  \end{tikzpicture}
  \caption{Strength ($LP/OPT$) of the basic linear programs of POP and \SL}
  \label{fig:strength:lp}
\end{figure*}

We were interested in the quality of the bounds of the LP relaxations
of both approaches.
%
% We compared the strength of the LP relaxations of both
% approaches using all 414 DIMACS instances.
%
The strength of the LP relaxation is defined as $LP/OPT$,
where $LP$ and $OPT$ are the objective values of the LP relaxation and the ILP,
respectively.
The $OPT$ values of all 414 instances are obtained using $\pop2$. 
For this test we used %we reimplemented ourself 
the basic model of \SL, which is 
denoted as \emph{``sNODEHOP''} in \cite{DBLP:journals/mpc/SinnlL16} 
and compared it with the basic model (\pop), (\ref{constr:root})--(\ref{constr:budget}).
The evaluations showed that 
for all of the 414 instances
our approach was at least as good as the \SL approach.
For 86 instances the basic \pop\ model was better.
%, and at all the remaining 328 instances has the same strength as \SL.
%
% The left plot in Figure \ref{fig:strength:lp} shows the strength of both approaches 
% in dependency of graph sizes. % and the hop limit. The left plot 
% The results of both approaches are almost the same, 
% but the new approach is slightly better than \SL (see Appendix). 
% % Let $s_{POP}$ and $s_{SL}$ be the strength of POP and \SL, respectively.
% % Only for the graphs with sizes
% % (50,63), (75,94), (75,150) and (100,125) 
% % 
% It is interesting to see that with the increasing graph sizes 
% our approach gets slower (Fig. \ref{fig:histogram:tVE})
% while the strength of the LP relaxation of our approach gets better.
% %
% The right plot in Figure \ref{fig:strength:lp} shows the strength of both approaches 
% in dependency of the hop limit.
% Also here the results of both approaches are almost the same. 
% and the new approach is slightly better than \SL (see Appendix). 
% It is also interesting to see here that with the increasing hop limit
% our approach gets slower (Fig. \ref{fig:histogram:tVE})
% while the strength of the LP relaxation of our approach gets better.
%
%
Figure \ref{fig:strength:lp} shows the strength of both approaches 
in dependency of the graph sizes and the hop limits.
With respect to both dependencies 
the results of both approaches are almost the same, 
but the new approach is slightly better than \SL (the detailed numbers can be found in the Appendix). 
It is interesting to see that with increasing graph density 
the strength of the LP relaxation of our approach improves.
The left plot in Figure \ref{fig:strength:lp} shows that for the graphs
with 500 nodes and at least \numprint{2500} edges, the LP bound is almost 
optimum. The same is true for an increasing hop limit. Already for 
hop limit 9 the strength is almost 1.
However, Figures \ref{fig:histogram:tVE} and \ref{fig:histogram:tH}
show that with increasing graph density and hop limits
our approach gets slower. This is because 
the size of our ILP model grows. This problem can be overcome
using a \emph{cutting-plane} approach.

%                       _           _             
%   ___ ___  _ __   ___| |_   _ ___(_) ___  _ __  
%  / __/ _ \| '_ \ / __| | | | / __| |/ _ \| '_ \ 
% | (_| (_) | | | | (__| | |_| \__ \ | (_) | | | |
%  \___\___/|_| |_|\___|_|\__,_|___/_|\___/|_| |_|
%                                                 
\section{Conclusion}
\label{sec:conclusion}
We presented a new binary linear program for \stprbh based on partial ordering 
of the nodes.
It has polynomial size and can 
be fed directly into a standard ILP solver. %integer linear programming solver.
% 
% We evaluated 4 variants of our ILP. % once with EURC and once without it.
% %
% Based on DIMACS \cite{DIMACS11BENCHMARK} instances
% we compared the new approach
% with (up to our knowledge) the best known 
% state-of-art algorithm 
% suggested by Sinnl and Ljubi\'c 
% \cite{DBLP:journals/mpc/SinnlL16}, which 
% is a sophisticated \bc algorithm.
% 
Using the DIMACS \cite{DIMACS11BENCHMARK} instances
we compared four variants of our ILP
with (up to our knowledge) the best known 
state-of-art algorithm 
suggested by Sinnl and Ljubi\'c 
\cite{DBLP:journals/mpc/SinnlL16}, which 
is a sophisticated \bc algorithm.
While \SL\ left 4 DIMACS instances unsolved
within a time limit of 3 hours,
% Evaluations show that, 
our model variant \pop2 %and \pop2\eurc solve
solves
%any of 414 
all 414 instances within a time limit of 2 hours. %and 3 hours respectively.
Our experiments showed that for all the instances
the strength of the LP relaxation of our basic model
dominates the \SL\ basic model.
%
%6711.26 seconds
%all of 414 instances.
%Using the EURC reduction the new approach
%
We also introduced 
a new reduction technique for \stprbh.
The reduction decreases the running times on average 
up to 1.55 times for the largest benchmark graphs
and 1.96 times for instances with the largest hop limit.
The model \pop2\eurc, which uses the reduction, 
solves 410 of 414 instances within a time limit of 1027 seconds.
%

% BIB
%\bibliographystyle{plain}
\bibliographystyle{plainurl}
\bibliography{stprbh_pop.bib}

%                                   _ _      
%   __ _ _ __  _ __   ___ _ __   __| (_)_  __
%  / _` | '_ \| '_ \ / _ \ '_ \ / _` | \ \/ /
% | (_| | |_) | |_) |  __/ | | | (_| | |>  < 
%  \__,_| .__/| .__/ \___|_| |_|\__,_|_/_/\_\
%       |_|   |_|                            
%\newpage
\clearpage
\appendix
\section{Results of DIMACS instances}
\subsection{Objective bounds of the unsolved\\\hbox{instances by \pop1 or \pop1\eurc}}

{
  \tiny
  %\normalsize
  \setlength{\tabcolsep}{1.5pt} 
  \begin{tabular}{l|rr|r|rrr|rrr}
  \noalign{\vspace{3mm}}
    \hline
      & && & &&& && \\[-3pt]
               &       &      &    &   & \pop1 &        &   &\pop1\eurc &\\ 
     Instance  & $|V|$ &$|E|$ &Opt &lb &ub     &Time[s] &lb &ub         &Time[s]  \\ 
     \hline
     C18-10-1000-5   &500  &12500  &318.0   &318.0   &318.0   &4365.51   &318.0   &320.0   &10805.81  \\
C20-100-1000-5  &500  &12500  &4768.0  &4768.0  &4830.0  &10808.90  &4768.0  &4897.0  &10804.77  \\
C20-10-1000-5   &500  &12500  &460.0   &457.0   &469.0   &10808.75  &457.0   &470.0   &10804.85  \\

     \hline
  \end{tabular}
}    
    
\subsection{Strength of the LP relaxation}
\begin{description}
\setlength\itemsep{0pt} 
\item[$s_{POP}$:] strength (LP/OPT) of the basic model of \pop
\item[$s_{SL}$:] strength (LP/OPT) of the basic model \emph{``sNODEHOP''} of \SL
\end{description}
%\begin{table}[h] %PM
\scalebox{0.85}{
  %\tiny
  \small
  %\normalsize
  \setlength{\tabcolsep}{3.5pt} 
  \begin{tabular}{l|r|r}
    \hline
      & & \\[-2pt]
     $(|V|,|E|)$     & $s_{\pop}$  & $s_{SL}$    \\ %& $s_{SL}-s_{\pop}$ \\ 
      & & \\[-2pt]
     \hline
      & & \\[-2pt]
	(50,63)      & 1.57234625  & 1.58200542  \\ %  & 0.00966\\ 
 	(50,100)     & 1.08009000  & 1.08067417  \\ %  & 0.00058\\
 	(75,94)      & 2.25153167  & 2.27047708  \\ %  & 0.01895\\
 	(75,150)     & 1.30335042  & 1.30501250  \\ %  & 0.00166\\
 	(100,125)    & 1.55867542  & 1.56920250  \\ %  & 0.01053\\
 	(100,200)    & 1.48682500  & 1.48936542  \\ %  & 0.00254\\
 	(500,625)    & 1.41726117  & 1.41844117  \\ %  & 0.00118\\
 	(500,1000)   & 1.09830350  & 1.09940650  \\ %  & 0.00110\\
 	(500,2500)   & 1.01934767  & 1.01947050  \\ %  & 0.00012\\
 	(500,12500)  & 1.01247111  & 1.01249911  \\ %  & 0.00003\\
     \hline
  \end{tabular}
  \hspace{12pt} 
  \begin{tabular}{l|r|r}
    \hline
      %& & & \\[-2pt]
      & & \\[-2pt]
     $H$     & $s_{\pop}$  & $s_{SL}$    \\ %& $s_{SL}-s_{\pop}$ \\ 
      & & \\[-2pt]
     \hline
      & & \\[-2pt]
      3   & 2.98347278  & 3.01212306  \\ %& 0.02865\\ 
      5   & 1.34991289  & 1.35152433  \\ %& 0.00161\\
      6   & 1.10631972  & 1.10689056  \\ %& 0.00057\\
      9   & 1.04166361  & 1.04172139  \\ %& 0.00006\\
      12  & 1.03708972  & 1.03708972  \\ %& 0.00000\\
      15  & 1.01063967  & 1.01065889  \\ %& 0.00002\\
      25  & 1.00852678  & 1.00852800  \\ %& 0.00000\\
     \hline
  \end{tabular}
}    
% \end{table} %PM

\begin{figure*} %[ht]
  \begin{minipage}[ht]{\textwidth}
\subsection{Running times for all 414 DIMACS instances}
    \begin{description}
    \setlength\itemsep{0pt} 
    \item[$|V'|$,$|E'|$:] number of nodes and edges after EURC reduction
    %\item[p1, p1r, p2, p2r, SL:] respectively \pop1, \pop1\eurc, \pop2, \pop2\eurc, \SL
    \item[SL:] \SL algorithm \cite{DBLP:journals/mpc/SinnlL16}
    %\item[Time:] total time in seconds (consists of the time for reduction, writing an ILP, and $t_{solv}$)
    \item[Time:] total time (in seconds) spent for the reduction, writing 
    an ILP and by the ILP-solver
    %\item[$t_{solv}$:] time spent by the MIP solver 
    \end{description}
  \end{minipage}

  \centering
    \setlength{\tabcolsep}{2pt}
    \scriptsize
    \begin{tabular}{l|rr|rr|r|rrrr|r}
      \hline
      & && && & &&&& \\[-3pt]
       %       &       &      &        &       &    &p1 &p1r &p2 &p2r &SL\\ 
               &       &      &        &       &    &\pop1 &\pop1\eurc &\pop2 &\pop2\eurc &SL\\ 
     Instance  & $|V|$ &$|E|$ & $|V'|$ &$|E'|$ &Opt &Time[s] &Time[s] &Time[s] &Time[s] &Time[s]  \\ 
      \hline
      \input{figure/table_dimacs_part1.tex}
      \hline
    \end{tabular}
\end{figure*}

% FIG
\def
    %\tiny
\begin{figure*}[ht]
  \centering
    \setlength{\tabcolsep}{2pt}
    %\setlength\extrarowheight{3.5pt}

    %\begin{longtable}{l|rr|rr|r|r|r|r|r|r}
    %\small
    %\footnotesize
    \scriptsize
    %\tiny
    \begin{tabular}{l|rr|rr|r|rrrr|r}
      \hline
      & && && & &&&& \\[-3pt]
       %       &       &      &        &       &    &p1 &p1r &p2 &p2r &SL\\ 
               &       &      &        &       &    &\pop1 &\pop1\eurc &\pop2 &\pop2\eurc &SL\\ 
     Instance  & $|V|$ &$|E|$ & $|V'|$ &$|E'|$ &Opt &Time[s] &Time[s] &Time[s] &Time[s] &Time[s]  \\ 
      %\endhead
      \hline
      %\endfoot
      %\hline
      %\endlastfoot
      \input{#}
      \hline
    \end{tabular}
%   \caption{$|V'|$,$|E'|$ are number of the nodes and edges after EURC reduction,
% SL  \SL algorithm \cite{DBLP:journals/mpc/SinnlL16}
% Time: total time (in seconds) spent for the reduction, writing 
%     } 
\end{figure*}
1{
    %\tiny
\begin{figure*}[ht]
  \centering
    \setlength{\tabcolsep}{2pt}
    %\setlength\extrarowheight{3.5pt}

    %\begin{longtable}{l|rr|rr|r|r|r|r|r|r}
    %\small
    %\footnotesize
    \scriptsize
    %\tiny
    \begin{tabular}{l|rr|rr|r|rrrr|r}
      \hline
      & && && & &&&& \\[-3pt]
       %       &       &      &        &       &    &p1 &p1r &p2 &p2r &SL\\ 
               &       &      &        &       &    &\pop1 &\pop1\eurc &\pop2 &\pop2\eurc &SL\\ 
     Instance  & $|V|$ &$|E|$ & $|V'|$ &$|E'|$ &Opt &Time[s] &Time[s] &Time[s] &Time[s] &Time[s]  \\ 
      %\endhead
      \hline
      %\endfoot
      %\hline
      %\endlastfoot
      \input{#1}
      \hline
    \end{tabular}
%   \caption{$|V'|$,$|E'|$ are number of the nodes and edges after EURC reduction,
% SL  \SL algorithm \cite{DBLP:journals/mpc/SinnlL16}
% Time: total time (in seconds) spent for the reduction, writing 
%     } 
\end{figure*}
}

%\count255 = 1
{\tiny
%\dimacs{figure/table_dimacs_part1.tex} 

    %\tiny
\begin{figure*}[ht]
  \centering
    \setlength{\tabcolsep}{2pt}
    %\setlength\extrarowheight{3.5pt}

    %\begin{longtable}{l|rr|rr|r|r|r|r|r|r}
    %\small
    %\footnotesize
    \scriptsize
    %\tiny
    \begin{tabular}{l|rr|rr|r|rrrr|r}
      \hline
      & && && & &&&& \\[-3pt]
       %       &       &      &        &       &    &p1 &p1r &p2 &p2r &SL\\ 
               &       &      &        &       &    &\pop1 &\pop1\eurc &\pop2 &\pop2\eurc &SL\\ 
     Instance  & $|V|$ &$|E|$ & $|V'|$ &$|E'|$ &Opt &Time[s] &Time[s] &Time[s] &Time[s] &Time[s]  \\ 
      %\endhead
      \hline
      %\endfoot
      %\hline
      %\endlastfoot
      \input{figure/table_dimacs_part2.tex}
      \hline
    \end{tabular}
%   \caption{$|V'|$,$|E'|$ are number of the nodes and edges after EURC reduction,
% SL  \SL algorithm \cite{DBLP:journals/mpc/SinnlL16}
% Time: total time (in seconds) spent for the reduction, writing 
%     } 
\end{figure*}

    %\tiny
\begin{figure*}[ht]
  \centering
    \setlength{\tabcolsep}{2pt}
    %\setlength\extrarowheight{3.5pt}

    %\begin{longtable}{l|rr|rr|r|r|r|r|r|r}
    %\small
    %\footnotesize
    \scriptsize
    %\tiny
    \begin{tabular}{l|rr|rr|r|rrrr|r}
      \hline
      & && && & &&&& \\[-3pt]
       %       &       &      &        &       &    &p1 &p1r &p2 &p2r &SL\\ 
               &       &      &        &       &    &\pop1 &\pop1\eurc &\pop2 &\pop2\eurc &SL\\ 
     Instance  & $|V|$ &$|E|$ & $|V'|$ &$|E'|$ &Opt &Time[s] &Time[s] &Time[s] &Time[s] &Time[s]  \\ 
      %\endhead
      \hline
      %\endfoot
      %\hline
      %\endlastfoot
      \input{figure/table_dimacs_part3.tex}
      \hline
    \end{tabular}
%   \caption{$|V'|$,$|E'|$ are number of the nodes and edges after EURC reduction,
% SL  \SL algorithm \cite{DBLP:journals/mpc/SinnlL16}
% Time: total time (in seconds) spent for the reduction, writing 
%     } 
\end{figure*}

    %\tiny
\begin{figure*}[ht]
  \centering
    \setlength{\tabcolsep}{2pt}
    %\setlength\extrarowheight{3.5pt}

    %\begin{longtable}{l|rr|rr|r|r|r|r|r|r}
    %\small
    %\footnotesize
    \scriptsize
    %\tiny
    \begin{tabular}{l|rr|rr|r|rrrr|r}
      \hline
      & && && & &&&& \\[-3pt]
       %       &       &      &        &       &    &p1 &p1r &p2 &p2r &SL\\ 
               &       &      &        &       &    &\pop1 &\pop1\eurc &\pop2 &\pop2\eurc &SL\\ 
     Instance  & $|V|$ &$|E|$ & $|V'|$ &$|E'|$ &Opt &Time[s] &Time[s] &Time[s] &Time[s] &Time[s]  \\ 
      %\endhead
      \hline
      %\endfoot
      %\hline
      %\endlastfoot
      \input{figure/table_dimacs_part4.tex}
      \hline
    \end{tabular}
%   \caption{$|V'|$,$|E'|$ are number of the nodes and edges after EURC reduction,
% SL  \SL algorithm \cite{DBLP:journals/mpc/SinnlL16}
% Time: total time (in seconds) spent for the reduction, writing 
%     } 
\end{figure*}

    %\tiny
\begin{figure*}[ht]
  \centering
    \setlength{\tabcolsep}{2pt}
    %\setlength\extrarowheight{3.5pt}

    %\begin{longtable}{l|rr|rr|r|r|r|r|r|r}
    %\small
    %\footnotesize
    \scriptsize
    %\tiny
    \begin{tabular}{l|rr|rr|r|rrrr|r}
      \hline
      & && && & &&&& \\[-3pt]
       %       &       &      &        &       &    &p1 &p1r &p2 &p2r &SL\\ 
               &       &      &        &       &    &\pop1 &\pop1\eurc &\pop2 &\pop2\eurc &SL\\ 
     Instance  & $|V|$ &$|E|$ & $|V'|$ &$|E'|$ &Opt &Time[s] &Time[s] &Time[s] &Time[s] &Time[s]  \\ 
      %\endhead
      \hline
      %\endfoot
      %\hline
      %\endlastfoot
      \input{figure/table_dimacs_part5.tex}
      \hline
    \end{tabular}
%   \caption{$|V'|$,$|E'|$ are number of the nodes and edges after EURC reduction,
% SL  \SL algorithm \cite{DBLP:journals/mpc/SinnlL16}
% Time: total time (in seconds) spent for the reduction, writing 
%     } 
\end{figure*}

    %\tiny
\begin{figure*}[ht]
  \centering
    \setlength{\tabcolsep}{2pt}
    %\setlength\extrarowheight{3.5pt}

    %\begin{longtable}{l|rr|rr|r|r|r|r|r|r}
    %\small
    %\footnotesize
    \scriptsize
    %\tiny
    \begin{tabular}{l|rr|rr|r|rrrr|r}
      \hline
      & && && & &&&& \\[-3pt]
       %       &       &      &        &       &    &p1 &p1r &p2 &p2r &SL\\ 
               &       &      &        &       &    &\pop1 &\pop1\eurc &\pop2 &\pop2\eurc &SL\\ 
     Instance  & $|V|$ &$|E|$ & $|V'|$ &$|E'|$ &Opt &Time[s] &Time[s] &Time[s] &Time[s] &Time[s]  \\ 
      %\endhead
      \hline
      %\endfoot
      %\hline
      %\endlastfoot
      \input{figure/table_dimacs_part6.tex}
      \hline
    \end{tabular}
%   \caption{$|V'|$,$|E'|$ are number of the nodes and edges after EURC reduction,
% SL  \SL algorithm \cite{DBLP:journals/mpc/SinnlL16}
% Time: total time (in seconds) spent for the reduction, writing 
%     } 
\end{figure*}
 
}

\end{document}